\documentclass[a4paper,UKenglish,cleveref,autoref,thm-restate,pdfa]{lipics-v2021}


\bibliographystyle{plainurl}

\title{Temporal Explorability Games} 

\titlerunning{Temporal Explorability Games}

\author{Pete Austin}{University of Liverpool, United Kingdom}{sgpausti@liverpool.ac.uk}{https://orcid.org/0000-0003-0238-8662}{}
\author{Sougata Bose}{UMONS – Universit\'e de Mons, Belgium}{sougata.bose@umons.ac.be}{https://orcid.org/0000-0003-3662-3915}{}
\author{Nicolas Mazzocchi}{Slovak University of Technology in Bratislava, Slovak Republic}{nicolas.mazzocchi@stuba.sk}{https://orcid.org/0000-0001-6425-5369}{}
\author{Patrick Totzke}{University of Liverpool, United Kingdom}{totzke@liverpool.ac.uk}{https://orcid.org/0000-0001-5274-8190}{} 


\authorrunning{P.~Austin, S.~Bose, N.~Mazzocchi, and P.~Totzke} 

\Copyright{Pete Austin, Sougata Bose, Nicolas Mazzocchi and Patrick Totzke}

\ccsdesc{Theory of computation~Network games}
\ccsdesc{Theory of computation~Representations of games and their complexity}
\ccsdesc{Mathematics of computing~Graphs and surfaces}
\ccsdesc{Theory of computation~Formal languages and automata theory}

\keywords{Temporal Graphs, Explorability, Reachability, Games} 

\category{} 

\relatedversion{} 

 \supplement{}

\funding{
	This work has been supported by the EPSRC, projects EP/V025848/1 and EP/X042596/1 and by the
	Fonds de la Recherche Scientifique – FNRS under Grant n° T.0188.23
	(PDR ControlleRS).
	The research was mainly performed when N.~Mazzocchi was affiliated with the Institute of Science and Technology Austria where he was financed by the ERC-2020-AdG 101020093.	
}


\nolinenumbers 

\hideLIPIcs  

\EventEditors{John Q. Open and Joan R. Access}
\EventNoEds{2}
\EventLongTitle{42nd Conference on Very Important Topics (CVIT 2016)}
\EventShortTitle{CVIT 2016}
\EventAcronym{CVIT}
\EventYear{2016}
\EventDate{December 24--27, 2016}
\EventLocation{Little Whinging, United Kingdom}
\EventLogo{}
\SeriesVolume{42}
\ArticleNo{23}
\usepackage{tikz}
\usepackage[basic]{complexity}
\usepackage{colortbl}
\usepackage{mathtools}
\usepackage{amsmath}
\usepackage{dsfont}
\usepackage{centernot}
\usepackage{todonotes}
\colorlet{NMColor}{brown!60}
\newcommand{\notabene}[2][]{{\todo[color=NMColor, size=\footnotesize,#1]{\normalcolor\normalfont NM: #2}}}
\newcommand{\NMColor}[1]{\textcolor{NMColor}{#1}}
\setlength{\marginparwidth}{3.6cm}
\setlength{\marginparsep}{0.2cm}

\newcommand{\ptnote}[2][]{\todo[color=blue!25,#1]{PT: #2}}

\newcommand{\panote}[2][]{\todo[color=green!25,#1]{PA:#2}}


\usetikzlibrary{positioning}
\usetikzlibrary{arrows}
\usetikzlibrary{calc}
\usetikzlibrary{shapes}
\usetikzlibrary{automata}

\usetikzlibrary{decorations.pathreplacing}
\usetikzlibrary{decorations.text}

\usetikzlibrary{fit}
\usetikzlibrary{backgrounds}
\usetikzlibrary{calc}
\usetikzlibrary{arrows.meta}

\usetikzlibrary{trees}
\usetikzlibrary{shapes}
\usetikzlibrary{decorations.pathmorphing}
\usetikzlibrary{decorations.markings}
\usetikzlibrary{arrows}

\usetikzlibrary{arrows.meta}
\usetikzlibrary{shapes.geometric}
\usetikzlibrary{fit}
\tikzset{>=stealth, shorten >=1pt}

\tikzset{every state/.style={
            very thick,
            fill=black!10,
            rounded corners=1mm,
            minimum size=6mm,inner sep=2pt,
}}

\tikzset{anode/.style={
        font=\small,
        rectangle
}}
\tikzset{ostate/.style={state,rectangle}}
\tikzset{estate/.style={state,
            minimum size=8mm,
diamond}}

\tikzset{every picture/.style={
            >={Triangle},
            thick,
            ->,
            initial text={},
            initial distance=2.5em,
            thick
}}

\tikzset{every edge/.append style={
  shorten <=2pt,
  shorten >=2pt,
}}
\tikzset{every loop/.append style={
  shorten <=2pt,
  shorten >=2pt,
}}

\tikzset{every label/.style={
        text=black!50,
        font=\small,
}}

\tikzset{
         strike through/.append style={
    decoration={markings, mark=at position 0.5 with {
    \draw[-] ++ (-5pt,-5pt) -- (5pt,5pt);}
  },postaction={decorate}}
}

\tikzstyle{zoki state}=[thick,minimum size=18pt, circle,draw, font=\small]
\tikzstyle{zoki estate}=[thick,minimum size=22pt, diamond,draw, font=\small]
\tikzstyle{zoki transition}=[->,thick,>=stealth,shorten >=2pt,shorten <=2pt, font=\small]
\tikzstyle{loop above right}=[out=60,in=30, min distance=5mm, looseness=8]
\tikzstyle{loop above left}=[out=150,in=120, min distance=5mm, looseness=8]
\tikzstyle{loop below left}=[out=-120,in=-150, min distance=5mm, looseness=8]
\tikzstyle{loop below right}=[out=-30,in=-60, min distance=5mm, looseness=8]
\newcommand{\zokiInitialLength}{18pt}
\newcommand{\zokiInitialStyle}{solid}
\newcommand{\zokiInitialAngle}{180}
\newcommand{\zokiInitialPos}{left}
\newcommand{\zokiInitialText}{}
\tikzoption{zoki initial style}{\tikzaddafternodepathoption{\renewcommand{\zokiInitialStyle}{#1}}}
\tikzoption{zoki initial length}{\tikzaddafternodepathoption{\renewcommand{\zokiInitialLength}{#1}}}
\tikzoption{zoki initial angle}{\tikzaddafternodepathoption{\renewcommand{\zokiInitialAngle}{#1}}}
\tikzoption{zoki initial text}{\tikzaddafternodepathoption{\renewcommand{\zokiInitialText}{#1}}}
\tikzoption{zoki initial pos}{\tikzaddafternodepathoption{\renewcommand{\zokiInitialPos}{#1}}}
\tikzstyle{zoki initial}=[after node path={{%
		[to path={[zoki transition] (\tikztostart) --  (\tikztotarget) \tikztonodes}]%
		(\tikzlastnode.\zokiInitialAngle)++(\zokiInitialAngle:\zokiInitialLength) edge[at start, \zokiInitialStyle] node[\zokiInitialPos, inner sep=0pt, outer sep=1pt] {\zokiInitialText} (\tikzlastnode)%
}}]
\tikzstyle{zoki final}=[after node path={ node[zoki state, scale=.8] at (\tikzlastnode) {} }]


\newcommand{\N}{\mathbb{N}}

\renewcommand{\epsilon}{\varepsilon}

\newcommand{\x}{\times}

\newcommand{\abs}[1]{\lvert #1 \rvert}

\newcommand{\eqby}[2][=]{\stackrel{\text{{\tiny{#2}}}}{#1}}
\newcommand{\eqdef}{\eqby{def}}






\newcommand{\arena}{\mathcal{A}}



\newcommand{\step}[2][]{\Step{#2}{}{#1}}
\newcommand{\Step}[3]{\ensuremath{\,{\stackrel{#1}{\longrightarrow}}{}^{\scriptstyle{#2}}_{\scriptstyle{#3}}}\,}

\newcommand{\states}{V}
\newcommand{\nodes}{V}
\newcommand{\edges}{E}

\newcommand{\true}{\mathop{true}}

\newcommand{\hor}{h}

\newcommand{\init}{s_0}
\newcommand{\temp}{\theta}

\RequirePackage{xspace}
\newcommand{\PONE}{Player~1\xspace}
\newcommand{\PTWO}{Player~2\xspace}


\begin{document}

\maketitle

\begin{abstract}
	Temporal graphs extend ordinary graphs with discrete time that affects the availability of edges.
We consider solving games played on temporal graphs where one player aims to explore the graph, i.e., visit all vertices.
The complexity depends majorly on two factors: the presence of an adversary and how edge availability is specified.

We demonstrate that on static graphs, where edges are always available, solving explorability games is just as hard as solving reachability games.
In contrast, on temporal graphs, the complexity of explorability coincides with generalized reachability ($\NP$-complete for one-player and $\PSPACE$-complete for two player games).
We show that if temporal graphs are given symbolically, even one-player reachability (and thus explorability and generalized reachability) games are $\PSPACE$-hard.
For one player, all these are also solvable in $\PSPACE$ and for two players, they are in $\PSPACE$, $\EXP$ and $\EXP$, respectively.

%
%

\end{abstract}

\section{Introduction}
\label{sec:intro}
Two player zero-sum games on graphs are a common formalism in formal verification, especially reactive synthesis (see e.g.~\cite{GAMES2002,GAMES2023,PR1989,P1977}).
The two players jointly determine an infinite path stepwise, where the owner of the current vertex gets to extend the path to a valid successor.
One player aims to satisfy a given winning condition, such as reaching a target vertex, and the opposing player aims to prevent that.
Under very general conditions, that are satisfied here, such games are determined, meaning that exactly one player has a winning strategy that guarantees a favourable outcome for them no matter their opponent's choices.
Solving a game refers to the algorithmic task to determine which player has a winning strategy.

Temporal graphs are a way to model dynamic systems: they extend graphs with discrete and global time and can specify for each edge at which times it can be traversed.
Temporal graphs are typically given as sequence of graphs over the same vertex set, an encoding we refer to as \emph{explicit}.
For our purposes this is polynomially equivalent to encoding temporal graphs as ordinary graphs in which (directed) edges carry an explicit list of timestamps, which we also refer to as explicit encodings\footnote{This is trivial for timestamps in unary encoding; Allowing binary-encoded timestamps in the input does not impact the complexity of solving explorability games since winning plays necessarily use consecutive times and thus cannot exist if the edge relation is too sparse.}.
Alternatively, and more interestingly, we consider succinct symbolic representations where the availability of edges is given as logical predicate in the existential fragment of Presburger Arithmetic. 

In this paper we study the computational complexity of solving games played on temporal graphs (where players' choices are required to respect the temporal constraints on edges).
Our focus lies on \emph{explorability} games, where \PONE wins if and only if all vertices of the graph are visited.
Explorability generalizes reachability conditions in the sense that reachability games straightforwardly and in logarithmic space reduce to explorability games.
On the other hand, explorability is a special case of so-called \emph{generalized reachability} conditions of \cite{FH12}, where several target sets are given and at least one vertex must be reached in each.

\medskip
For the benefit of readers unfamiliar with temporal graphs, or turn-based games, we
start by providing some intuition to help appreciate the difficulties caused by the dynamic changes of the arena, as well as the presence of antagonistic choice.
Consider first the one-player explorability game played on the arena $\arena_1$ in
\cref{fig:initial-examples-b}. Starting in $s$, \PONE wins by exploring the graph as $s\to t\to u \to v$. From any other vertex there is no exploring path due to the non-availability of edges at time $0$.
An easy property that implies non-explorability of static graphs
is the presence of two pairwise non-reachable vertices.
Indeed, if two such vertices exists then any path can at most see one of them, hence not explore the graph.
This can be turned into a full characterization of explorability even in the presence of an opponent (see \cref{lem:co-reach}). However, due to the temporal constraints on edges, this cannot na\"ively be extended to temporal graphs. For example, arena $\arena_1$ is explorable yet $v$ is not reachable from $t$ starting at time 0, and vice versa.
On the other hand, in arena $\arena_2$ (\cref{fig:initial-examples-a}), for every pair of vertices at least one can reach the other starting at time $0$.
Yet, from no initial vertex and at no time, $\arena_2$ is explorable.

\Cref{fig:initial-examples-c} demonstrates that even if \PONE wins the explorability game (as here from vertex $s$), she may not be able to enforce visiting the vertices in a particular order, as that can be influenced by her opponent's moves.
Dually, as for the game starting in vertex $s$ in $\arena_2$, 
even if \PTWO wins, he may not be able to dictate which vertex is left unexplored.



\begin{figure}
	\begin{subfigure}[t]{0.3\textwidth}\centering
		\scalebox{1.15}{\begin{tikzpicture}[]
				
	\node[zoki state](u2) at (0,0){$u$};
	\node[zoki state](s2) at (0,2){$s$};
	\node[zoki state](t2) at (2,0){$t$};
	\node[zoki state](v2) at (2,2){$v$};
		
	\path[zoki transition]
	(u2) edge (s2)
	(s2) edge node[auto]{$0$} (t2)
	(t2) edge node[above]{$1$} (u2)
	(s2) edge node[above]{$0, 3$} (v2)
	(v2) edge [loop right] node[right]{} (v2)
	;
\end{tikzpicture}}
		\caption{Arena $\arena_1$: explorable from $s$, not from $t,u,v$.}
		\label{fig:initial-examples-b}
	\end{subfigure}
	\hfill
	\begin{subfigure}[t]{0.3\textwidth}\centering
		\scalebox{1.15}{\begin{tikzpicture}[]
				
	\node[zoki state](s1) at (0,0){$u$};
	\node[zoki state](t1) at (0,2){$s$};
	\node[zoki state](u1) at (2,0){$t$};
	\node[zoki state](v1) at (2,2){$v$};
		
	\path[zoki transition]
	(s1) edge node[above]{} (t1)
	(t1) edge node[above]{} (u1)
	(u1) edge node[above]{} (s1)
	(u1) edge node[auto,swap]{$0$} (v1)
	(t1) edge node[above]{$0,1$} (v1)
	(v1) edge [loop right] node[right]{} (v1)
	;
\end{tikzpicture}}
		\caption{Arena $\arena_2$: not explorable from any vertex.}
		\label{fig:initial-examples-a}
	\end{subfigure}
	\hfill
	\begin{subfigure}[t]{0.3\textwidth}\centering
		\scalebox{1.15}{\begin{tikzpicture}[]
				
	\node[zoki estate](s1) at (0,2){$s$};
	\node[zoki state](t1) at (2,2){$t$};
	\node[zoki state](u1) at (0,0){$u$};
	\node[zoki state](v1) at (2,0){$v$};
		
	\path[zoki transition]
		(s1) edge  (t1)
		(s1) edge  (u1)
		(u1) edge[bend left=8]  (v1)
		(v1) edge[bend left=8]  (u1)
		(t1) edge[bend left=8]  (v1)
		(v1) edge[bend left=8]  (t1)
	;
\end{tikzpicture}}
		\caption{Arena $\arena_3$: explorable from $s$ but the order is not enforceable.}
		\label{fig:initial-examples-c}
	\end{subfigure}
	\caption{Examples of games on temporal graphs.
		Vertices owned by \PONE are drawn as circles, those owned by \PTWO as diamonds. The labels on edges denote the times at which they are available. Edges without labels are always available.
	}\label{fig:example}
\end{figure}

\subparagraph{Related Work}
Temporal graphs have been used to analyse dynamic networks and distributed systems in dynamic topologies, such as dissemination/propagation of information~\cite{Chen23,DAE23,KLO2010,ravi1994} or the spread of diseases~\cite{WANG2022311}.
There is a large body of mathematical work that considers temporal generalizations of various graph-theoretic notions and properties~\cite{DFS2023,FMS2009,MCS2014}.
Related algorithmic questions include
graph colouring~\cite{MMZ21},
travelling salesman~\cite{MS14},
maximal matchings~\cite{MMN23},
and checking the existence of temporal cliques~\cite{M24},
Eulerian circuits~\cite{Bumpus21},
vertex-cover sets~\cite{AMGZ20}, and
explorability~\cite{Adamson22,AKRIDA2021179,EHK21} (called exploration).
Several prior works on explorability assume structural properties that ensure that the graph is explorable. Questions then focus on the minimal time to explore.
Pelc~\cite{Pelc18} provides several sufficient conditions on temporal graphs to be explorable.
Spirakis and Michail~\cite{MS14} showed that without assumptions on the graph, checking explorability can be done in linear time for static graphs and is $\NP$-complete for temporal graphs.
In most of the works, a path in the temporal graph is allowed to wait at a vertex for an unbounded amount of time.

The edge relation is often deliberately left unspecified and sometimes only assumed to satisfy some weak assumptions about connectedness, frequency, or fairness
to study the worst or average cases in uncontrollable environments.
Depending on the application, one distinguishes between ``online'' questions (e.g.~\cite{FGJ24,MMS11}), where the edge availability is revealed stepwise, as opposed to the ``offline'' variant where all is given in advance.
We refer to~\cite{DRM12,HJ19,M2015} for overviews of temporal graph theory and its applications.

Fijalkow and Horn~\cite{FH12} study games with \emph{generalized reachability} objectives, which generalize not only reachability but also explorability games.
Generalized reachability conditions are conjunctions of reachability conditions:
$\PONE$ aims to reach at least one vertex out of each of several target sets.
Explorability thus corresponds to the case where every vertex forms a singleton target set.
Solving generalized reachability games is $\PSPACE$-complete~\cite{FH12}, but in polynomial time if all target sets are singletons.

Reachability and parity games played on (symbolically represented) temporal graphs have been introduced in~\cite{ABT24}; solving these games is~$\PSPACE$-complete. The lower bound was shown by reduction from the emptiness problem for unary alternating finite automata, crucially relying on the presence of antagonistic choice.
As in~\cite{ABT24}, our notion of temporal graphs assumes that some edge must be traversed at every unit time. 
Waiting (not moving the token for a while) as occasionally permitted \cite{M2015,AMGZ20} can be modelled by adding self-loops.

Turn-based games involving temporal constraints have also been studied in the context of games played on the configuration graphs of timed automata~\cite{AD1994}.
Solving timed parity games is complete for $\EXP$~\cite{CHP2011,MPS1995} and the lower bound already holds for reachability games on timed automata with only two clocks~\cite{JT2007}.
However, the time in temporal graphs is discrete as opposed to the continuous time semantics in timed automata.
A direct translation of (games on) temporal graphs to equivalent timed automata games requires two clocks: one to hold the global time used to
check the edge predicate and one to ensure that time progresses one unit per step.
Fearnley and Jurdzi\'nski \cite{FJ13} showed that the reachability problem (solving one-player reachability games) for two-clock timed automata
is already \PSPACE-hard.
Our lower bound constructions have a similar flavour but are incomparable in that they do not re-prove nor are implied by their result. They make crucial use of resetting of clocks which is impossible in our model. Our construction in turn uses either antagonistic choice (\cref{thm:temp-2p}) or the more powerful transition guards in symbolic representations (\cref{lem:symbolic-hard}).

\begin{figure}[t]
	\centering
	{
		\resizebox{1.05\textwidth}{!}{
			\begin{tabularx}{15cm}{ccccc}
				                                                      &                                                               & \textbf{Static}           & \textbf{Explicit}           & \textbf{Symbolic}                                       \\
				\cline{2-5}
				                                                      & \multicolumn{1}{|c|}{\multirow{2}{*}{\textbf{Reachability}}}  & $\NL$-complete            & $\NL$-complete              & \multicolumn{1}{c|}{\textbf{$\PSPACE$-complete}}        \\
				                                                      & \multicolumn{1}{|c|}{}                                        & \cite[Theorem 4.18]{AB09} & [\cref{rmk:temp-complete}]  & \multicolumn{1}{c|}{[\cref{lem:symbolic-all}]}          \\
				                                                      & \multicolumn{1}{|c|}{\multirow{2}{*}{\textbf{Explorability}}} & \textbf{$\NL$-complete}   & \textbf{$\NP$-complete}     & \multicolumn{1}{c|}{\textbf{$\PSPACE$-complete}}        \\
				                                                      & \multicolumn{1}{|c|}{}                                        & [\cref{thm:static}]       & [\cref{thm:temp-1p}]        & \multicolumn{1}{c|}{[\cref{lem:symbolic-all}]}          \\
				                                                      & \multicolumn{1}{|c|}{\multirow{2}{*}{\textbf{Gen. Reach}}}    & $\NP$-complete            & $\NP$-complete              & \multicolumn{1}{c|}{\textbf{$\PSPACE$-complete}}        \\
				                                                      & \multicolumn{1}{|c|}{}                                        & \cite[Theorem 3]{FH12}    & [\cref{rmk:temp-complete}]  & \multicolumn{1}{c|}{[\cref{lem:symbolic-all}]}          \\
				\cline{2-5}
				\multirow{-8}{*}{\rotatebox{90}{\textbf{One-player}}} & \multicolumn{1}{|c|}{\multirow{2}{*}{\textbf{Reachability}}}  & $\P$-complete             & $\P$-complete               & \multicolumn{1}{c|}{$\PSPACE$-complete}                 \\
				                                                      & \multicolumn{1}{|c|}{}                                        & \cite{GAMES2002}          & [\cref{rmk:temp-complete}]  & \multicolumn{1}{c|}{\cite[Theorem 2]{ABT24}}            \\
				                                                      & \multicolumn{1}{|c|}{\multirow{2}{*}{\textbf{Explorability}}} & \textbf{$\P$-complete}    & \textbf{$\PSPACE$-complete} & \multicolumn{1}{c|}{\textbf{$\PSPACE$-hard; In $\EXP$}} \\
				                                                      & \multicolumn{1}{|c|}{}                                        & [\cref{thm:static}]       & [\cref{thm:temp-2p}]        & \multicolumn{1}{c|}{[\cref{lem:symbolic-all}]}          \\
				                                                      & \multicolumn{1}{|c|}{\multirow{2}{*}{\textbf{Gen. Reach}}}    & $\PSPACE$-complete        & $\PSPACE$-complete          & \multicolumn{1}{c|}{\textbf{$\PSPACE$-hard; In $\EXP$}} \\
				\multirow{-6}{*}{\rotatebox{90}{\textbf{Two-player}}} & \multicolumn{1}{|c|}{}                                        & \cite[Theorem 1]{FH12}    & [\cref{rmk:temp-complete}]  & \multicolumn{1}{c|}{[\cref{lem:symbolic-all}]}          \\
				\cline{2-5}
			\end{tabularx}
		}
	}
	\caption{A table detailing the computational complexities for the one and two-player variants of reachability, explorability, and generalized reachability games
		played on static, explicitly and symbolically represented temporal graphs. 
		New results are in boldface.
	}
	\label{fig:overview}
\end{figure}

\subparagraph{Contributions}
We study the complexity of solving explorability games and contrast the worst-case complexity for related decision questions for games played on static graphs versus temporal graphs.
It turns out that explorability is no harder than reachability on static graphs whereas on temporal graphs, 
explorability games exhibit the hardness of generalized reachability when the temporal edge availability is given explicitly. 
For temporal explorability games with a succinct, symbolic representation, we have $\PSPACE$-hardness for both variants but a $\PSPACE$-$\EXP$ complexity gap for two-player games.
Specifically,
\begin{enumerate}
	\item
	      on static graphs, solving explorability games is complete for polynomial time (and $\NL$-complete for one-player games);
	\item
	      on explicitly represented temporal graphs, explorability games are $\PSPACE$-complete ($\NP$-complete in the one-player variant);
	\item
	      reachability and thus explorability games on symbolically represented temporal graphs are $\PSPACE$-hard even in the single-player variant. This strengthens the known lower bound from \cite{ABT24} which required a second antagonistic player.
\end{enumerate}
{\Cref{fig:overview} summarizes these and related claims.}
Our most involved constructions are the $\PSPACE$~lower bounds, both from the satisfiability of quantified Boolean formulae (QBF).

\section{Notations}
\label{sec:preliminaries}


\begin{definition}[Temporal Graphs]
	A temporal graph $G=(\nodes,\edges)$ is a directed graph where $\nodes$ are vertices and $\edges:\nodes^2\to 2^\N$ is the edge availability relation
	that maps each pair of vertices to the set of times at which the respective directed edge can be traversed.
	If $i\in \edges(u,v)$ we call $v$ an \emph{$i$-successor} of $u$ and write $u\step{i}v$.
	We call $G$ \emph{static} if for all $u,v\in\nodes$,
$\edges(u,v)$ is either $\N$ or $\emptyset$.
	Its \emph{horizon} is $\hor(G)=\sup_{u,v\in\nodes}(\edges(u,v))$, that is, the largest finite time at which any edge is available, or $\infty$ if no such finite time exists.
\end{definition}

\begin{figure}
	\begin{center}
		\scalebox{1.00}{\begin{tikzpicture}[
	ORANGE/.style = {color=orange, fill=orange!20},
	RED/.style = {color=red, fill=red!20},
	BLUE/.style = {color=blue, fill=blue!20}
	]
	\node[zoki state] (s0) at (0, -1){$s,0$};
	\node[zoki state] (t0) at (0, -2){$t,0$};
	\node[zoki state] (u0) at (0, -3){$u,0$};
	\node[zoki state] (v0) at (0, -4){$v,0$};
	
	\node[zoki state] (s1) at (3, -1){$s,1$};
	\node[zoki state] (t1) at (3, -2){$t,1$};
	\node[zoki state] (u1) at (3, -3){$u,1$};
	\node[zoki state] (v1) at (3, -4){$v,1$};
	
	\node[zoki state] (s2) at (6, -1){$s,2$};
	\node[zoki state] (t2) at (6, -2){$t,2$};
	\node[zoki state] (u2) at (6, -3){$u,2$};
	\node[zoki state] (v2) at (6, -4){$v,2$};
	
	\node[zoki state] (s3) at (9, -1){$s,3$};
	\node[zoki state] (t3) at (9, -2){$t,3$};
	\node[zoki state] (u3) at (9, -3){$u,3$};
	\node[zoki state] (v3) at (9, -4){$v,3$};
	
	\node[zoki state] (s4) at (12, -1){$s,4$};
	\node[zoki state] (t4) at (12, -2){$t,4$};
	\node[zoki state] (u4) at (12, -3){$u,4$};
	\node[zoki state] (v4) at (12, -4){$v,4$};
	
	
	\draw (s0) edge node {}(t1);
	\draw (s0) edge node {}(v1);
	\draw (v0) edge node {}(v1);
	\draw (u0) edge node {}(s1);
	
	\draw (v1) edge node {}(v2);
	\draw (t1) edge node {}(u2);
	\draw (u1) edge node {}(s2);
	
	\draw (v2) edge node {}(v3);
	\draw (u2) edge node {}(s3);
	
	\draw (s3) edge node {}(v4);
	\draw (v3) edge node {}(v4);
	\draw (u3) edge node {}(s4);
	
\end{tikzpicture}}
		\caption{The expansion of $\mathcal{A}_1$.}
		\label{fig:expansion}
	\end{center}
\end{figure}
Naturally, one can unfold a temporal graph $G$ into its \emph{expansion}
up to time $T\in\N\cup\{\infty\}$,
which is the static graph $G_T$ with vertices $\nodes\x\{0,1,\ldots,T, T+1\}$
and $(u,\temp)\to (v,\temp+1)$ if and only if $\temp\in\edges(u,v)$.
We denote by \emph{the expansion} of a temporal graph $G$, its expansion up to its horizon $\hor(G)$. See for instance \cref{fig:expansion} for the expansion of a temporal graph $\mathcal{A}_1$ (up to its horizon $\hor(\?A_1)=3$).

\medskip
Complexity bounds for decision problems about temporal graphs very much depend on the representation of the input.
We will say a temporal graph is represented \emph{explicitly} if it is given as sequence of directed graphs, one per unit time from $0$ up to its (finite) horizon.
We also consider a \emph{symbolic} representation where the edge relation $\edges$
is represented as a formula in the existential fragment of Presburger Arithmetic ($\exists$PA), the first-order theory over natural numbers with equality and addition.
That is, the formula $\Phi_{u,v}(x)$ with one free variable $x$
represents the set of times at which an edge from $u$ to $v$ is available
as
\mbox{$\edges(u,v) = \{n \mid \Phi_{u,v}(n) \equiv \true\}$}.
We use common syntactic sugar including inequality and multiplication with constants.

In this representation, 
checking if an edge is available at a given time,
i.e., checking whether a given $\exists$PA formula is satisfied by a given valuation, 
is $\NP$-complete (and in polynomial time if the number of quantifiers are fixed) \cite[Corollary 1]{S84}.
The symbolic representation of a temporal graph can be exponentially more succinct than the explicit one: using repeated doubling one can express exponentially large values.
	{This representation is also at most exponentially larger because the Presburger-definable edge relation must be ultimately periodic with base and period at most exponential \cite{GS66}.}

\begin{definition}[Games on graphs]\label{def:games}
	A \emph{game} is played by two opposing players.
	It consists of
	a directed graph $(\states,\edges)$, a partitioning $\states=\states_1\uplus\states_2$ of the vertices into those controlled by \PONE and \PTWO respectively, and a \emph{winning condition}.
	We refer to $\arena=(\states_1,\states_2,\edges)$ as the \emph{arena} of the game.

	The game starts with a token on an initial vertex $\init\in\states$
	and is turn-based, where in round $j$, the owner of the vertex occupied by the token moves it to some successor. 
	This way an infinite path 
	$\rho=s_0s_1\ldots$
	called a \emph{play} is generated based on choices made by each player given the current round and vertex.
	A play is won by \PONE if it satisfies the given winning condition,
	and by \PTWO otherwise.

	A \emph{strategy} for Player~$i$ is a recipe for how to move.
	It is a function $\sigma_i:\states^*\states_i\to \states$
	from finite paths ending in a vertex $s$ in $V_i$ 
	to some successor.
	%
	A strategy is \emph{winning from $s\in\states$} if Player~$i$ wins every play
	that starts in $s$ and during which all decisions are according to $\sigma_i$.
\end{definition}

We call a vertex $s$ winning for Player~$i$ if there exists a winning strategy from $s$, and call the subset of all such vertices the \emph{winning region} for that player.
The main algorithmic question is to \emph{solve} a game, meaning in other words to compute the winning regions for a given arena and winning condition.
We consider the following winning conditions.
\begin{itemize}
	\item \emph{Reachability} towards a given set $F\subseteq \states$ of target vertices. This is satisfied by those plays that eventually visit a vertex in $F$.
	\item \emph{Generalized Reachability} towards target sets $F_1,F_2,\ldots,F_k\subseteq \states$. This is satisfied by those plays that visit at least one vertex of every target set $F_i$.
	\item \emph{Explorability} is the condition that asks that eventually
	      every vertex in $\states$ is visited.
\end{itemize}

Note that explorability is a special case of generalized reachability where every vertex
$s_i\in\states$ 
corresponds to one target set $F_i=\{s_i\}$.

\medskip
We study games played on the expansion of a given temporal graph
$G=(\states,\edges)$,
where the
ownership of vertices and winning condition are defined on the underlying temporal graph $G$ and lifted to \emph{the expansion} of $G$.
That is, for any time $\theta$ and $u\in \states_i$, the vertex $(u,\theta)$ is owned by Player~$i$.
Similarly, (generalized) reachability conditions are defined in terms of target sets $F\subseteq \states$ and explorability asks to visit every vertex in $\states$.

\section{Static Graphs}
\label{sec:static}
We first discuss the bounds of explorability games for both one and two player variants, on static graphs, i.e., every edge is available at any time.

\begin{lemma}\label{lem:reach-to-explore}
	For every arena $\arena = (\states_1,\states_2,\edges)$ and vertices $s,t\in V$
	one can construct (in logarithmic space) an arena $\?B= (\states'_1,\states'_2,\edges')$ such that
	$\states' = \states'_1 \cup \states'_2$, 
	$\states\subseteq \states'$,
	$\states_2=\states_2'$ and
	\PONE wins the reachability game on $\arena$ from $s$ to $t$ if, and only if,
	she wins the explorability game on $\?B$.
\end{lemma}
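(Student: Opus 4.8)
The plan is to give a logarithmic-space reduction from reachability to explorability. Starting from $\arena$ and the pair $s,t$, I keep $\arena$ as a sub-arena of $\?B$ (retaining its vertices and edges, so that $\states\subseteq\states'$ and the reachability game is embedded) and only add fresh vertices, all declared to belong to \PONE, so that $\states_2=\states_2'$ holds automatically. The explorability game on $\?B$ is played from the same initial vertex $s$. The construction has to guarantee two implications: (i) every play that explores $\?B$ has passed through $t$, and (ii) once the token has reached $t$, \PONE can force a continuation that visits all remaining vertices.

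For implication (i) I add a single fresh \PONE-vertex $z$ whose only incoming edge is $t\to z$, together with a self-loop $z\to z$. Since $z\in\states'$ must be visited in any exploring play and $z$ can only be entered from $t$, every play consistent with a winning explorability strategy necessarily visits $t$; projecting such a strategy back to $\arena$ yields one that forces reaching $t$ from $s$. This is the easy ``only if'' half, provided the remaining gadgetry introduces no new route from $s$ to $t$ among the original vertices.

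For implication (ii) I attach to $z$ a \PONE-controlled \emph{collector} that, after the token has reached $t$, lets \PONE drive it to every vertex of $\?B$ in turn. Because exploration only requires visiting each vertex once, it suffices that from the collector's hub \PONE can force the token to any target and then force it back to the hub to continue. For \PONE-owned originals this is immediate; for \PTWO-owned originals I interpose fresh \PONE-vertices on their outgoing transitions, so that whatever successor \PTWO picks, control returns to a \PONE-vertex from which the hub is reachable. Crucially the entire collector is gated behind $z$ (hence behind $t$), the intention being that it offers no shortcut usable during the reachability phase.

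The two directions then read as follows. For ``if'', given a winning reachability strategy $\sigma$ in $\arena$, \PONE first mimics $\sigma$ to force the token to $t$, moves to $z$, and then uses the collector to sweep up all remaining vertices. For ``only if'' I argue as above that visiting $z$ forces reaching $t$. The main obstacle is implication (ii): reconciling the adversarial game on \PTWO's vertices with the requirement that those very vertices be explored, \emph{without} handing \PONE a spurious path to $t$ that would falsify faithfulness. The tension is real—forced-return edges out of originals threaten to expose the collector hub early, while touring originals threatens to expose $t$—so the delicate point is to arrange the forced-return gadgets and the gating so that the collector reaches every vertex yet can never be exploited to reach $t$ prematurely. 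This is exactly where the static-graph characterization of explorability in \cref{lem:co-reach}, framed in terms of mutual reachability under \PONE's control, should guide the precise design.
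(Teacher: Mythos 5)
Your proposal correctly isolates the crux --- the collector's ``return'' edges threaten to create routes to $t$ that do not exist in $\arena$ --- but it does not resolve it, and the specific design you sketch fails. If every interposed vertex on a \PTWO-edge has an edge back to a hub from which \PONE can dispatch the token to every vertex, then composing these two gadgets gives \PONE, from any reachable \PTWO-vertex, access to the entire arena: whenever some \PTWO-vertex $v$ is reachable from $s$ in $\arena$ but $t$ is not forceable from $s$, \PONE can play $s\to\cdots\to v\to[v,u]\to\text{hub}\to\cdots\to t\to z$ in $\?B$ and win the explorability game, falsifying the ``only if'' direction. Gating the collector behind $z$ does not help, because the return edges out of the interposed vertices are exactly what un-gates it. Deferring the repair to \cref{lem:co-reach} is not a proof; that lemma characterizes when explorability holds on a given arena, it does not tell you how to build a faithful gadget.

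The paper's construction sidesteps this tension with two small changes. First, the subdivision vertices $[v,u]$ (one per edge, owned by \PONE) offer only two moves: continue to $u$, or \emph{reset to the initial vertex $s$} --- not to a hub. Resetting to $s$ is harmless for faithfulness: if \PONE cannot force $t$ from $s$ in $\arena$, then returning to $s$ arbitrarily often cannot create a win, so $t$ remains unreachable in $\?B$ and both games are lost. Second, the ``dispatch'' edges to all other vertices emanate from $t$ itself, so they are usable only after $t$ has legitimately been reached. \PONE then explores by iterating: run the reachability strategy from $s$ to $t$, jump from $t$ to one unvisited vertex, take any outgoing edge into a subdivision vertex, and reset to $s$. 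No separate collector, hub, or vertex $z$ is needed, and the balancing act you describe never arises.
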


\begin{proof}
	Without loss of generality, assume that $t\in\states_1$ and that each vertex has at least one outgoing edge.
	We construct $\mathcal{B}$ as follows. Firstly, for every edge between vertices $v,u$, introduce a new vertex $[v,u]\in\states_1'$  that can either move to $u$ or reset, i.e.\ move back to $s$.
	Secondly, from target $t$ there exists an edge to every other vertex.

	We claim that \PONE wins the reachability game on $\arena$ if and only if she wins the explorability game on $\mathcal{B}$.
	Indeed, if \PONE does not win on $\arena$, then she cannot visit $t$ on either arena and therefore loses in both.
	Conversely, if \PONE does win on $\arena$, then she can explore $\mathcal{B}$ by repeatedly moving from $s$ to $t$; visit a previously unseen vertex, then reset to $s$.
	Note that a reduction from a one-player reachability game where \PONE owns all the vertices will construct another one-player explorability game where this still holds.
\end{proof}

The following characterization of explorability games is adapted from \cite[Theorem 4]{FH12},
where $s \preceq t$ denotes that \PONE wins the reachability game from $s$ with target $t$.

\begin{lemma}\label{lem:co-reach}
	Consider an arena with vertex set $\states$ and let $s\in\states$ be an initial vertex.
	\PONE wins the explorability game from $s$ if, and only if, both
	1) for all $u,v\in\states$ either $u\preceq v$ or $v\preceq u$; and
	2) for all $u\in\states$, $s\preceq u$.
\end{lemma}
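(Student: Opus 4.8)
The plan is to prove both implications, relying on two facts available here: reachability games on static arenas are determined, and under the stated hypotheses $\preceq$ is a \emph{total preorder} on the finite set $\states$. First I would record the easy structural facts about $\preceq$. It is reflexive, since from $u$ the singleton target $\{u\}$ is met immediately, and transitive, since a winning reachability strategy from $u$ to $v$ followed by one from $v$ to $w$ yields a winning strategy from $u$ to $w$: \PONE plays the first strategy until $v$ is hit, then switches to the second, each phase terminating because each sub-strategy is winning. Consequently, condition~1 says exactly that $\preceq$ is total.

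For the ``only if'' direction, assume \PONE wins explorability from $s$ with some strategy $\sigma$. Since $\sigma$ forces a visit to every vertex against every opponent, it in particular forces a visit to any fixed $u$, which witnesses $s\preceq u$ and hence condition~2. For condition~1, suppose toward a contradiction that $u\not\preceq v$ and $v\not\preceq u$ for some $u,v$. By determinacy of reachability games, \PTWO has a strategy avoiding $v$ from $u$ and one avoiding $u$ from $v$. I then build a single \PTWO strategy that plays arbitrarily until the token first enters $\{u,v\}$ — say it enters at $u$, the case of $v$ being symmetric — and thereafter follows the avoid-$v$ strategy. Because this is the \emph{first} visit to $\{u,v\}$, vertex $v$ has not yet been seen, and it is never seen afterwards; if the token never enters $\{u,v\}$, neither is seen. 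Either way some vertex stays unexplored, so \PTWO wins, contradicting that $\sigma$ was winning.

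For the ``if'' direction, assume conditions~1 and~2. Using that $\preceq$ is a total preorder, I linearly order the vertices as an enumeration $w_1,w_2,\dots,w_n$ of $\states$ with $w_i\preceq w_{i+1}$ for every $i$: sort by the preorder and break ties within each $\preceq$-equivalence class arbitrarily, so consecutive elements are always $\preceq$-related. Condition~2 gives $s\preceq w_1$. \PONE then plays the concatenation of winning reachability strategies: reach $w_1$ from $s$, then reach $w_{i+1}$ from $w_i$ for each $i$, switching phase whenever the current target is hit and restarting the next sub-strategy from the vertex currently occupied. Each phase terminates because each sub-strategy is a winning reachability strategy, so against any opponent all of $w_1,\dots,w_n$ (which enumerate $\states$) are eventually visited and \PONE wins explorability.

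The main obstacle is making the strategy-composition arguments rigorous in the presence of the opponent, both for transitivity of $\preceq$ and for the final explorability strategy: I must argue that invoking a winning reachability strategy afresh from the vertex reached at the end of the previous phase is legitimate and that every phase provably terminates. This rests on positional determinacy of reachability games, which lets each sub-strategy be treated as a fresh positional strategy independent of the play history so far.
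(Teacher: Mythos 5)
Your proof is correct and follows essentially the same route as the paper: linearize $\preceq$ and concatenate winning reachability strategies for sufficiency, and for necessity observe that a failure of condition~1 traps the play after its first visit to $\{u,v\}$ while a failure of condition~2 leaves some $u$ unreachable. You simply spell out in more detail what the paper leaves implicit (transitivity/totality of $\preceq$, the determinacy-based spoiler strategy for \PTWO, and the positional-strategy justification for composing phases).
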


\begin{proof}

	Suppose both conditions 1 and 2 are true.
	Point 1 implies that there exists a linearization of all $n$ vertices such that $v_1\preceq v_2\preceq\ldots\preceq v_n$.
	%
	By Point 2 we have that $s\preceq v_1$ and therefore there is such a linearization of $\preceq$ with $v_1=s$.
	Consequently, for all $1\leq i < n$ there exists a \PONE strategy $\sigma_i$
	that wins the reachability game from  $v_i$ to $v_{i+1}$.
	\PONE can follow the $\sigma_i$ from $v_i$ until $v_{i+1}$ is reached, then switch to $\sigma_{i+1}$ and so on, until all vertices have been visited.

	Suppose we have an arena where the first condition is false: there
	are $u,v\in \states$ with $u\not\preceq v$ and $v\not\preceq u$.
	Regardless of the starting vertex $s$, once a play visits $u$, \PONE cannot ensure to visit $v$ from then on (or vice versa).
	Finally, if the second condition is false
	then there must be one vertex $u$ that \PONE cannot ensure to visit from $s$.
\end{proof}

\begin{theorem}\label{thm:static}
	Solving one-player (respectively two-player) explorability games on a static graph is complete for $\NL$ (respectively $\P$).
\end{theorem}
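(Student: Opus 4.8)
The plan is to derive both completeness results from the two preceding lemmas: \cref{lem:co-reach} yields the upper bounds by reducing explorability to the reachability-game relation $\preceq$ together with two easily checkable conditions, while \cref{lem:reach-to-explore} yields the lower bounds by reducing reachability games to explorability games. In both the one-player and two-player cases the only work left is to bound the cost of computing $\preceq$ and of verifying the two conditions, and to import the known hardness of reachability games through the log-space reduction.

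\textbf{Upper bounds.} By \cref{lem:co-reach}, \PONE wins the explorability game from $s$ if and only if (1) for all $u,v\in\states$ either $u\preceq v$ or $v\preceq u$, and (2) $s\preceq u$ for all $u$. In the one-player case, where \PONE owns every vertex, $u\preceq v$ is exactly directed reachability from $u$ to $v$, i.e.\ the $\NL$-complete $s$-$t$-connectivity problem. To place the whole decision in $\NL$ I would instead decide the negations of (1) and (2): for (2), guess a vertex $u$ and verify $s\not\preceq u$; for (1), guess a pair $u,v$ and verify $u\not\preceq v$ and $v\not\preceq u$. Each such non-reachability test is the complement of directed reachability and hence lies in $\NL$ by the Immerman–Szelepcs\'enyi theorem; composing a logarithmic-space guess with these tests keeps the procedure in $\NL$. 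For the two-player case, $u\preceq v$ is the winning condition of a two-player reachability game, solvable in polynomial time by backward attractor computation; fixing each target $v$ and computing its attractor yields the entire relation $\preceq$ in polynomial time, after which (1) and (2) are checked directly. Hence two-player explorability is in $\P$.

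\textbf{Lower bounds.} By \cref{lem:reach-to-explore}, every reachability game reduces in logarithmic space to an explorability game, and the construction preserves the one-player property (when \PONE owns all vertices) as well as \PTWO's vertex set ($\states_2=\states'_2$). Since one-player reachability on a static graph is precisely directed $s$-$t$-connectivity, which is $\NL$-complete, one-player explorability is $\NL$-hard. Two-player reachability games on static graphs are $\P$-complete (the classical game-of-reachability result), so the same reduction makes two-player explorability $\P$-hard. Combining with the upper bounds gives $\NL$-completeness and $\P$-completeness, respectively.

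\textbf{Main obstacle.} The delicate point is the one-player $\NL$ upper bound: conditions (1) and (2) are universally quantified, so a direct nondeterministic search naturally certifies their \emph{failure} rather than their truth, and it is closure of $\NL$ under complement that makes membership in $\NL$ rather than only co\nobreakdash-$\NL$ go through. The two-player bounds and all the lower bounds are then routine once the relation $\preceq$ is identified with (the winning region of) reachability and the reduction of \cref{lem:reach-to-explore} is observed to respect the player partition.
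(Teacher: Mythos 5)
Your proof is correct and follows essentially the same route as the paper's: the lower bounds come from \cref{lem:reach-to-explore} together with the known $\NL$- and $\P$-hardness of one- and two-player reachability games, and the upper bounds come from verifying the two conditions of \cref{lem:co-reach} via at most $n^2$ reachability queries, decidable in $\NL$ (one player) and $\P$ (two players). Your explicit appeal to closure of $\NL$ under complement (Immerman–Szelepcs\'enyi) to handle the universally quantified conditions in the one-player case is a detail the paper leaves implicit, and it is exactly the right justification.
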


\begin{proof}
	The hardness is a consequence of \cref{lem:reach-to-explore}.
	Note that the explorability game constructed from a one-player reachability game is also one-player.
	As one and two-player reachability games are $\NL$-hard and $\P$-hard respectively, the lower bounds follow.
	The upper bounds follow from \cref{lem:co-reach},
	observing that at most
	$n^2$ reachability queries are necessary to verify the two conditions in the lemma.
	These queries can be done in $\NL$ in the one-player case, and $\P$ in the two-player case.
\end{proof}

\section{Explicitly represented Temporal Graphs}
\label{sec:temporal}
Before discussing explorability games, 
we first remark that solving reachability and generalized reachability games have the same complexity on explicitly represented temporal graphs and on static graphs.
This is because for any temporal graph $\arena$ one can construct its expansion $\arena'$,
which is only polynomially larger assuming explicit encodings,
and modify the winning conditions appropriately to get a game on this static graph:
every target vertex $v$ in $\arena$ gives rise to target vertices $(v,\temp)$ in $\arena'$ for all times $0\le \temp \le \hor(\arena)$.

This reduction allows transferring complexity upper bounds for solving games on static arenas \cite{FH12}.

\begin{theorem}\label{rmk:temp-complete}
	Assuming explicit encodings,
	solving one-player games on temporal graphs is \NL-complete
	for reachability conditions and \NP-complete for generalized reachability.
	Solving two-player games on temporal graphs is \P-complete for reachability and
	\PSPACE-complete for generalized reachability.
\end{theorem}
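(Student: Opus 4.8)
The plan is to reduce, in both directions, between temporal games and static games on the \emph{expansion}, treating the four claims uniformly.

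For the upper bounds I would proceed as follows. Given an explicitly encoded temporal graph $\arena$, first construct its expansion $\arena'$. Under explicit encoding the horizon $\hor(\arena)$ is bounded by the input size, so $\arena'$ has $O(\size{\states}\cdot\hor(\arena))$ vertices and is only polynomially larger than $\arena$. Ownership lifts vertex-wise ($(u,\theta)$ is owned by the owner of $u$) and the objectives lift target-wise: a reachability target $F$ becomes $\{(v,\theta)\mid v\in F,\ 0\le\theta\le\hor(\arena)\}$, and a generalized reachability instance $F_1,\dots,F_k$ becomes $F_1',\dots,F_k'$ with each $F_i'$ its lifted copy. Since the temporal game is by definition played on the expansion, \PONE wins the temporal game if and only if she wins this static game, and its type (reachability resp.\ generalized reachability) is preserved. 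It then remains to invoke the known static bounds on a polynomial-size arena: one-player reachability amounts to st-reachability in the expansion --- which need not be materialised, since a witnessing path can be guessed on the fly in logarithmic space --- and is thus solvable in \NL; two-player reachability is solved by attractor computation in \P; and the generalized reachability algorithms of \cite{FH12} run in \NP (one player) and \PSPACE (two players).

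For the lower bounds I would use that a static graph is the special case of a temporal graph whose edges are available at every time. The only subtlety is that an explicit encoding needs a finite horizon, whereas a static graph has infinite horizon. This is harmless because the relevant hard instances admit short winning plays: for reachability a winning play can be taken of length at most $\size{\states}$, and the reductions of \cite{FH12} for generalized reachability (from satisfiability, respectively from quantified Boolean formulae) produce games whose winning plays have polynomial length. Hence it suffices to make every edge available up to a polynomially bounded horizon $H$; the resulting explicit temporal game has the same winner, so the static hardness results of \cite{AB09,GAMES2002,FH12} transfer, yielding \NL-, \P-, \NP- and \PSPACE-hardness for the four cases respectively.

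I expect the main obstacle to be exactly this horizon issue in the generalized reachability lower bounds. One cannot faithfully encode an \emph{arbitrary} static generalized reachability game as an explicit temporal graph of polynomial horizon, because such a game may require winning plays of exponential length (the natural reachability game on the product with $2^k$ target-tracking states). The fix is not a black-box reduction from static generalized reachability but rather the observation that the specific hardness reductions already yield instances with polynomially bounded plays, so truncating availability to a polynomial horizon preserves the winner. The upper-bound direction, by contrast, is routine once one checks that the expansion stays polynomial, which is precisely what the explicit encoding guarantees.
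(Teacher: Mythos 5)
Your proof matches the paper's: the upper bounds come from lifting the game to the (polynomially larger) expansion and invoking the known static-graph bounds, and the lower bounds come from viewing static games as explicit temporal games with all edges available up to a suitable polynomial horizon. One correction to your ``main obstacle'' discussion: static generalized reachability games do \emph{not} require exponentially long winning plays --- under an attractor strategy in the product with the $2^k$ target-tracking subsets, the subset component increases at most $k$ times and within each fixed-subset slice the attractor rank strictly decreases, so a winning play has length at most $|V|(k+1)$; hence even a black-box reduction with polynomial horizon works. Your workaround via the specific hard instances produced by the reductions of \cite{FH12} is nevertheless also valid.
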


Note that an explorability game on a temporal graph corresponds to a generalized reachability game on its expansion as outlined above.
However, the target sets are no longer singleton sets and therefore the improved (\NL. resp.\ \P) upper bounds provided in \cite{FH12} for (one resp.\ two-player) generalized reachability with singleton targets do not apply to explorability on temporal graphs.
For example,
the explorability game on $\mathcal{A}_1$ (\cref{fig:example}) corresponds to
the generalised reachability game
on its expansion (\cref{fig:expansion})
with targets
$\{s\}\x\underline{4}$,
$\{t\}\x\underline{4}$,
$\{u\}\x\underline{4}$, and
$\{v\}\x\underline{4}$, where $\underline{4}=\{0,1,2,3,4\}$ are all possible times up to the horizon.

In contrast to games on static arenas, where explorability is not harder than reachability, we will see that on temporal graphs explorability is as hard as generalized reachability. 

\begin{theorem}\label{thm:temp-1p}
	Solving one-player explorability games on an explicit temporal graph is \NP-complete.
\end{theorem}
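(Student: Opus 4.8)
The plan is to prove membership in $\NP$ and $\NP$-hardness separately. For membership, I would first note that under an explicit encoding the horizon $\hor(\arena)$ is at most linear in the input size, since the representation lists one graph per time step from $0$ up to $\hor(\arena)$. Consequently the expansion is a finite directed acyclic graph (time strictly increases along every edge) with only $n\cdot(\hor(\arena)+2)$ vertices, where $n=\card{\states}$, hence of polynomial size. In the one-player setting a strategy determines a single play, so \PONE wins iff some path in this DAG starting at $(s,0)$ visits, as first coordinate, every vertex of $\states$. Because the expansion is acyclic and layered by time, such a path has length at most $\hor(\arena)+1$ and is therefore a polynomial-size certificate: the $\NP$ procedure guesses a sequence $(s,0)=(v_0,0)(v_1,1)\cdots(v_\ell,\ell)$ with $\ell\le \hor(\arena)+1$, then checks in polynomial time that each step is valid, i.e. $i\in\edges(v_i,v_{i+1})$ (a direct lookup in the explicit encoding), and that $\{v_0,\ldots,v_\ell\}=\states$.

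For hardness, I would reduce from the \NP-complete \emph{directed Hamiltonian path} problem (with unspecified endpoints). Given a digraph $G=(W,E_G)$ with $n=\card{W}$ vertices, I construct a one-player temporal graph $\arena$ on vertex set $W\cup\{s^\ast\}$, all vertices owned by \PONE, with initial vertex $s^\ast$. Every arc $(u,v)\in E_G$ is made available at all times $1,\ldots,n-1$, and a fresh source $s^\ast$ is connected to every $v\in W$ by an edge available only at time $0$. The horizon is thus $n-1$, so the reduction is polynomial and produces a valid explicit encoding (listing a star at time $0$ and $n-1$ copies of $G$).

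Correctness rests on capping the play length via the horizon. Starting from $(s^\ast,0)$, every play first moves to some vertex of $W$ at time $1$ and then follows arcs of $G$; since no edge is available after time $n-1$, a maximal play occupies exactly the time layers $0,1,\ldots,n$, i.e. $n+1$ positions. As $\arena$ has exactly $n+1$ vertices, exploration forces these positions to have pairwise distinct first coordinates, so the vertices visited at times $1,\ldots,n$ are $n$ distinct elements of $W$ joined by arcs of $G$ — precisely a Hamiltonian path of $G$, whose start vertex may be anything thanks to the edges out of $s^\ast$. Conversely, any Hamiltonian path $w_1\cdots w_n$ of $G$ yields the exploring play $s^\ast w_1\cdots w_n$. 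Hence $\arena$ is explorable from $s^\ast$ if and only if $G$ has a Hamiltonian path.

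I expect the only delicate points to be, first, in membership, justifying that the witnessing play is finite and short, which is immediate from acyclicity of the expansion and the linear bound on $\hor(\arena)$; and second, in hardness, arranging the timing so that ``visit every vertex'' coincides with ``visit every vertex exactly once'', which is exactly what the horizon bound $n-1$ buys, converting explorability into Hamiltonicity. I would also remark, following the paper's footnote, that allowing binary-encoded timestamps does not affect the argument, since winning plays necessarily use consecutive times.
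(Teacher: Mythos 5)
Your proposal is correct and follows essentially the same route as the paper: $\NP$-membership by guessing an exploring path of length at most the horizon (polynomial under explicit encoding), and $\NP$-hardness via Hamiltonian Path, where the paper simply cites Michail and Spirakis \cite[Proposition 2]{MS14} for the observation that an $n$-vertex digraph has a Hamiltonian path iff it can be explored in exactly $n$ steps, which is precisely what your horizon-capping gadget implements.
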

\begin{proof}
	The $\NP$-hardness is due to Michail and Spirakis \cite[Proposition 2]{MS14}
	by reduction from the Hamiltonian Path problem.
	Indeed, a directed graph with $n$ vertices has a Hamiltonian path if and only if it can be explored in exactly $n$ steps.
	A matching upper bound can be achieved by stepwise guessing an exploring path
	of length at most $\hor(G)$, which is polynomial given that the input graph is explicitly represented.
	%
\end{proof}




We now consider the impact of an antagonistic player for explorability on temporal graphs.
A key element of our lower bound construction is the interaction of antagonistic choice and the passing of time. Our reduction, from QBF, relies on a time-bounded final "flooding phase" that allows exploration only if sufficient time is left, and which can only be guaranteed by winning the preceding QBF game.

\begin{theorem}\label{thm:temp-2p}
	Solving two-player explorability games on explicit temporal graphs is $\PSPACE$-complete.
\end{theorem}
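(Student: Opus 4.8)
The plan is to show both containment in $\PSPACE$ and $\PSPACE$-hardness. For the upper bound, I would observe that an explorability game on a temporal graph corresponds, via the expansion, to a generalized reachability game on a static graph, as already noted above. By \cref{rmk:temp-complete}, two-player generalized reachability on the (polynomially larger, explicitly encoded) expansion is $\PSPACE$-complete, which immediately gives membership in $\PSPACE$. So the substance of the theorem lies entirely in the matching lower bound.

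For $\PSPACE$-hardness I would reduce from the satisfiability of a quantified Boolean formula (QBF) $\Psi = \exists x_1 \forall x_2 \exists x_3 \cdots Q x_n\, \varphi$, alternating the two players' control so that \PONE (the existential player) chooses truth values for the $\exists$-variables and \PTWO for the $\forall$-variables. First I would build a ``variable-assignment'' gadget: a sequence of vertices, one per variable, where at the vertex for $x_i$ the owning player (\PONE if $x_i$ is existential, \PTWO if universal) selects one of two outgoing edges encoding $x_i = \true$ or $x_i = \false$. The temporal constraints are used to pace this phase so that exactly one variable is decided per unit of time, and a global clock is thereby maintained implicitly through the layered structure of the expansion. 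After all $n$ variables have been assigned, the play records a full truth assignment $\alpha$, and I would route the token into a verification phase that tests whether $\alpha$ satisfies $\varphi$.

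The key device, as the introduction flags, is a time-bounded final \emph{flooding phase}: I would add a block of vertices that can all be visited (``flooded'') only if the token arrives there with enough remaining time before the horizon $\hor(G)$, and arrival with slack time is possible precisely when the assignment just chosen satisfies $\varphi$. Concretely, satisfying clauses should let \PONE cross the verification gadget quickly (consuming little time), whereas a falsified clause forces a detour that wastes time, so that the flooding block can no longer be fully explored within the horizon. Because explorability requires visiting \emph{every} vertex, \PONE wins the explorability game if and only if, against every \PTWO choice of universal variables, she can pick existential variables making $\varphi$ true and thereby reach the flooding block in time to explore it — which is exactly the truth of $\Psi$.

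The main obstacle I expect is engineering the timing so that the correspondence is tight in both directions. The delicate points are: (i) ensuring that \PTWO cannot sabotage exploration by some means other than forcing $\varphi$ false — for instance by loitering or by leaving auxiliary gadget vertices unvisited — which means every vertex introduced must be reachable ``for free'' on any play except those flood vertices whose visitability hinges on the satisfaction test; and (ii) calibrating the horizon and the time costs of satisfied versus falsified clauses so that the slack is strictly positive exactly on satisfying assignments. I would likely handle (i) by making all structural gadget vertices always-available and visitable regardless of the play, isolating the entire timing-sensitive behaviour in the flooding block, and handle (ii) by a careful accounting of edge availability times along the verification path. Verifying these two invariants rigorously is where the bulk of the proof effort will go.
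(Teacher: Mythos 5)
Your upper bound is exactly the paper's: pass to the expansion and invoke the $\PSPACE$ algorithm for generalized reachability, so that part is fine. Your lower-bound plan also shares the paper's skeleton (reduction from QBF, an alternating variable-selection gadget paced one variable per time step, and a horizon-bounded ``flooding'' block that can only be fully explored under favourable circumstances), and you correctly identify the two invariants that need care.

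However, the heart of the reduction --- the verification gadget --- is missing, and the mechanism you sketch for it does not work as stated. You propose that a satisfying assignment lets \PONE ``cross the verification gadget quickly'' while a falsified clause ``forces a detour that wastes time,'' so that the distinguishing signal is the \emph{arrival time} at the flooding block. But in this model every move advances global time by exactly one unit, so fast versus slow crossings require two reconverging paths of different lengths whose availability is controlled by the truth of $\varphi$ under the jointly chosen assignment --- and you give no construction achieving this, nor do you say which player selects the clause to be checked. The paper's gadget avoids the timing issue entirely: the schedule is identical on every play, and what varies is the \emph{number of still-unvisited vertices}. Concretely, all $k$ clause vertices are visited up front in a deterministic initial phase; the selection phase visits exactly one of $q_{x_i},q_{\lnot x_i}$ per variable, leaving $n$ literal vertices unvisited; then \PTWO picks a clause $C_i$, and from $q_{C_i}$ \PONE may move only to vertices $q_{\lnot\ell}$ for literals $\ell\in C_i$. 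If $C_i$ is satisfied, some such $q_{\lnot\ell}$ is fresh and \PONE nets one new vertex, leaving $n-1$ for the $(n-1)$-step flooding phase; if $C_i$ is falsified, every available successor was already visited, leaving $n$ vertices but only $n-1$ steps. Without either this counting device or a concrete realization of your timing device, the reduction is not established, so the proposal has a genuine gap at its central step.
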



\begin{proof}
	The upper bound for solving explorability games on a temporal arena $\arena$ follows from solving the generalized reachability game on the expansion of $\arena$, where we have a target set $F_i=\{(v_i,\temp)\mid \forall ~\temp\in \{0,1,\dots,\hor(\arena)\}\}$ for every vertex $v_i$ of $\arena$. Generalized reachability games can be solved in \PSPACE~\cite[Theorem 1]{FH12}.

	We provide a matching lower bound using a reduction from QSAT.
	Given a QBF formula
	$\Phi = \exists x_1\forall x_2\ldots \exists x_n.~C_1\land C_2\land\ldots\land C_k$, we construct a two-player explorability game $G_{\Phi}$ on a temporal graph so that \PONE wins iff $\Phi$ is true.
	Let $\varphi$ refer to the matrix $C_1\land C_2\land\ldots\land C_k$ of the given formula $\Phi$, that is, each $C_i$ is a disjunction of at most $3$ literals.
	The game $G_{\Phi}$
	has vertices $\states = \states_{\forall} \uplus \states_{\exists}\uplus \{q_{\varphi}\} \uplus \states_{\text{literals}} \uplus \states_{\text{clause}}$, where
	\begin{itemize}
		\item{
		      the vertices $\states_\exists = \{q_1,q_3,\ldots,q_n\}$,
		      $\states_{literals} = \{q_{x_1}, q_{\lnot x_1}, q_{x_2}, q_{\lnot x_2},\ldots, q_{x_n}, q_{\lnot x_n}\}$ and
		      $\states_{clauses} = \{q_{C_1}, q_{C_2}, \ldots, q_{C_k}\}$ are controlled by \PONE.
		      }
		\item{
		      the vertices	$\states_\forall \{q_2,q_4,\ldots,q_{n-1}\}$ and $\{q_\varphi\}$
		      are controlled by \PTWO.}
	\end{itemize}

	The game is played in $4$ phases, each of a fixed length, as follows.

	\begin{enumerate}
		\item
		      In the \emph{initial phase}, from time $0$ to $k-1$, the game starts from $q_{C_1}$ and visits all clause vertices of $\states_{clauses}$ in order and then ends up in the vertex $q_1$ corresponding to the first existential quantifier.
		      Formally, the edges available during this phase are, for all $0\le i<k$, $E(q_{C_i}, q_{C_{i+1}})= E(q_{C_k},q_1)= [0,k)$.
		      Edges of the initial phase are drawn in red in the example in \cref{fig:2p-temp-hard}.

		\item
		      The \emph{selection phase}, from time $k$ to $k+2n-1$, the game visits the vertices corresponding to the quantifiers in order of their appearance in the formula.
		      Depending on whether the quantifier is $\exists$ (or $\forall$), \PONE (respectively \PTWO) chooses to visit a vertex corresponding to either a positive or negative literal for each variable.
		      The edges available during this phase are for all $0\leq i <n$, $E(q_i,q_{x_i})=E(q_i,q_{\lnot x_i})=E(q_{x_i},q_{i+1})= E(q_{\lnot x_i},q_{i+1})=E(q_{x_n},q_{\varphi})= E(q_{\lnot x_{n}},q_{\varphi})=[k,k+2n)$.
		      Corresponding edges are in blue in \cref{fig:2p-temp-hard} and corresponds to choosing a valuation for the variables in the formula $\Phi$.

		\item
		      The \emph{evaluation phase}, at time $k+2n$ and $k+2n+1$, starts from vertex $q_{\varphi}$, where \PTWO selects a clause $C_i$ by moving to vertex $q_{C_i}$.
		      Formally, we have for all $0<i\leq k$, $E(q_{\varphi},q_{C_i})=\{k+2n\}$.
		      From $q_{C_i}$, \PONE has the choice to visit a literal vertex $q_{\ell}$ at time $k+2n+1$ if $\lnot \ell$ is contained in the clause $C_i$, where $\ell$ is a literal.
		      After this phase, all vertices in $\states_{\forall} \uplus \states_{\exists}\uplus \{q_{\varphi}\} \uplus \states_{\text{clause}}$ have been visited as well as half of $\states_{literal}$.
		      In \cref{fig:2p-temp-hard}, edges for the clause selection and literal selection are in green.

		\item
		      The game ends after a \emph{flooding phase} that lasts for exactly $n-1$ steps starting from time $k+2n+2$.
		      That is, all edges become unavailable from time $k+3n+1$ onwards.
		      During this phase, an edge is available from a literal vertex $q_{\ell_i}$ to both literal vertex $q_{x_{i+1}}$ and $q_{\lnot x_{i+1}}$.
		      Formally, the edges $(q_{\ell_i},q_{\ell_{i+1}})$, for all $0<i<n$ and $(q_{\ell_n},q_{\ell_1})$, where $\ell_j\in \{x_j,\lnot x_j\}$, are available at times $[k+2n+2,k+3n+1]$.
		      Thus, in this phase \PONE can visit exactly $n-1$ of the possibly unexplored vertices.
		      The flooding phase is shown in yellow in \cref{fig:2p-temp-hard}.
	\end{enumerate}

	Suppose the QBF formula $\Phi$ is true.
	Then the $\exists$-player has a strategy to assign values to the variables $x_{i}$ (for odd $i$) to ensure that $\varphi(\vec{x})$ is true.
	By following the same choices in the selection phase
	(moving $q_i\step{}q_{x_i}$ if $x_i$ is set to $\true$ and $q_i\step{}q_{\lnot x_i}$ otherwise),
	\PONE guarantees that when $q_\varphi$ is reached, the jointly chosen variable assignment $\vec{x}$
	satisfies $\varphi$.
	At this point the play has already visited exactly one vertex among $q_{x_j}$ and $q_{\lnot x_j}$ for all $j\le n$. Moreover, no matter which vertex $q_{C_i}$ is chosen by \PTWO in the next step, the corresponding clause $C_i$ is also satisfied by assignment $\vec{x}$.
	By construction, this means there exists a literal $\ell_j\in C_i$ such that vertex $q_{\ell_j}$
	has been visited before.
	The \PONE can then move $C_i\step{} q_{\lnot \ell_j}$ to end the evaluation phase.
	Finally, in the flooding phase, \PONE can freely move to visit the remaining $n-1$ literal vertices not visited thus far.
	This guarantees that indeed, all vertices have been visited and \PONE wins the explorability game.

	If the QBF instance $\Phi$ is not satisfiable, then the $\forall$-player has a strategy to assign values to the $x_{2i}$ to ensure that $\varphi(\vec{x})$ is false.
	By following the same choices
	during the selection phase
	(moving $q_i\step{}q_{x_i}$ if $x_i$ is set to $\true$ and $q_i\step{}q_{\lnot x_i}$ otherwise),
	\PTWO guarantees that when $q_\varphi$ is reached, the jointly chosen variable assignment
	does not satisfy $\Phi$, meaning that there is some clause $C_i$ that is not satisfied. In the evaluation phase, the strategy $\tau$ then picks $q_{C_i}$ as the successor from $q_{\varphi}$.
	Consequently, all successors
	of $q_{C_i}$ must be vertices that have already been seen in the play.
	This means that exactly $n$ of the literal vertices are not yet visited.
	However, the game ends after the following $n-1$ of the flooding phase.
	Therefore, one of the literal vertices must be left unexplored.
	%
\end{proof}

\begin{figure}[t]\centering
	\scalebox{1.15}{
		\begin{tikzpicture}[node distance=1.5cm]
			\clip (-.9,-2.8) rectangle (11.3,2.3);
			\tikzset{every label/.style={text=black, font=\small}}
			\node[zoki state, label=center:$q_1$] (q1) at (0,0) {};
			\node[zoki state, label=center:$q_{x_1}$, above right=of q1] (x1) {};
			\node[zoki state, label=center:$q_{\scalebox{.5}{$\lnot$} x_1}$, below right=of q1] (x1c) {};
			\node[zoki state, draw=none, label=center:$q_2$, below right=of x1] (q2)  {};
			\node[zoki estate] at (q2) {};
			\node[zoki state, label=center:$q_{x_2}$, above right=of q2] (x2) {};
			\node[zoki state, label=center:$q_{\scalebox{.5}{$\lnot$} x_2}$, below right=of q2] (x2c) {};
			\node[zoki state, opacity=0, below right=of x2] (qdots) {};
			\node[zoki state, label=center:$q_{x_3}$, above right=of qdots] (xn) {};
			\node[zoki state, label=center:$q_{\scalebox{.5}{$\lnot$} x_3}$, below right=of qdots] (xnc) {};
			\node[zoki state, draw=none, label=center:$q_{\varphi}$, below right=of xn] (formula) {};
			\node[zoki estate] at (formula) {};
			\node[zoki state, zoki initial, zoki initial angle=170, label=center:$q_{C_1}$, above right=of formula, yshift=8pt] (C1) {};
			\node[zoki state, label=center:$q_{C_4}$, below right=of formula, yshift=-8pt] (Ck) {};
			\node[zoki state, label=center:$q_{C_2}$] at ($(C1)!.3333!(Ck)$) (C2) {};
			\node[zoki state, label=center:$q_{C_3}$] at ($(C1)!.6666!(Ck)$) (Cdots) {};

			\path[zoki transition, green!50!olive]
			(C2.164) edge[to path={(\tikztostart) .. controls ++(165:.5) ..  (\tikztotarget) \tikztonodes}] (x1.330)
			(C2.164) edge[to path={(\tikztostart) .. controls ++(165:.5) ..  (\tikztotarget) \tikztonodes}] (x2.340)
			(C2.164) edge[to path={(\tikztostart) .. controls ++(165:.5) ..  (\tikztotarget) \tikztonodes}] (xn)
			(Cdots.190) edge[to path={(\tikztostart) .. controls ++(190:.5) ..  (\tikztotarget) \tikztonodes}] (x1c.30)
			(Cdots.190) edge[to path={(\tikztostart) .. controls ++(190:.5) ..  (\tikztotarget) \tikztonodes}] (x2c)
			(Ck) edge[to path={(\tikztostart) .. controls ++(180:.5) and  ++(320:2) ..  (\tikztotarget) \tikztonodes}] (x1c)
			(Ck) edge (xnc)
			(C1.195) edge (x2c)
			(C1.195) edge[bend right=20] (xnc)
			;
			\path[zoki transition, orange!50!yellow]
			(x1) edge (x2)
			(x1) edge[bend right] (x2c)
			(x1c) edge[bend left] (x2)
			(x1c) edge (x2c)
			(x2) edge (xn)
			(x2) edge[bend right] (xnc)
			(x2c) edge[bend left] (xn)
			(x2c) edge (xnc)
			(xn) edge[bend right=14] (x1)
			(xn) edge[to path={(\tikztostart) .. controls ++(165:8) and ++(160:5) ..  (\tikztotarget) \tikztonodes}] (x1c)
			(xnc) edge[bend left=14] (x1c)
			(xnc) edge[to path={(\tikztostart) .. controls ++(195:8) and ++(200:5) ..  (\tikztotarget) \tikztonodes}] (x1)
			;
			\path[zoki transition, blue!50!cyan]
			(q1) edge (x1)
			(q1) edge (x1c)
			(x1) edge (q2)
			(x1c) edge (q2)
			(q2) edge (x2)
			(q2) edge (x2c)
			(x2) edge (qdots)
			(x2c) edge (qdots)
			(qdots) edge (xn)
			(qdots) edge (xnc)
			(xn) edge (formula)
			(xnc) edge (formula)
			;
			\path[zoki transition, green!50!olive] 
			(formula) edge[shorten <=6pt] (C1)
			(formula) edge[shorten <=2pt] (C2)
			(formula) edge[shorten <=2pt] (Cdots)
			(formula) edge[shorten <=6pt] (Ck)
			;
			\path[zoki transition, red]
			(C1) edge (C2)
			(C2) edge (Cdots)
			(Cdots) edge (Ck)
			;
			\draw[zoki transition, rounded corners=4, red] (Ck.270) -- ++(270:.3) -| (q1);
			\node[zoki state, label=center:$q_3$, below right=of x2] (qdots) {};
		\end{tikzpicture}
	}
	\caption{The exploration game for QBF formula $\exists x_1{:}\forall x_2{:}\exists x_3{:}$$(C_1=x_2\lor x_3)$ $\land$ $(C_2=\lnot x_1 \lor \lnot x_2 \lor \lnot x_3)$ $\land$ $(C_3=x_1\lor x_2)$ $\land$ $(C_4=x_1\lor x_3)$, where colours encode edge availibility:
		red=$[0, k)$, blue=$[k, k+2n)$, green=$[k+2n, k+2n+1]$, yellow=$(k+2n+1, k+3n+1]$.}
	\label{fig:2p-temp-hard}
\end{figure}

\section{Symbolically represented Temporal Graphs}
\label{sec:compressed}
We now consider games on temporal graphs that are given in a succinct representation.
More precisely, we assume that the arena is given as the underlying (static) graph
and for every edge, the set of times at which it is available
is represented by a formula in the existential fragment of Presburger Arithmetic ($\exists$PA).
Recall that evaluating $\varphi(x)$ for a given $\exists$PA formula $\varphi$ and time $x\in\N$ is in $\NP$.
We call this representation of temporal graphs \emph{symbolic}.
Symbolic representations that are equally as powerful as $\exists$PA, such that they can represent semi-linear sets, include the likes of solving a union of linear equations and commutative context-free grammars \cite{CHISTIKOV2018147}.
Finding a satisfying valuation for both of these encodings of semi-linear sets has also been shown to be in $\NP$, much like $\exists$PA.
We use $\exists$PA as the formulae produced are concise and easy to understand after only a brief viewing.
There are other concise encodings (such as the inclusion of the universal fragment of PA) that could be used for a symbolic temporal graph, however the problem of finding membership or satisfiability of a given value may be too complex for the encoding to be efficient. For example, satisfiability with respects to the universal fragment of PA is exponential and not in $\NP$.


\begin{theorem}\label{lem:symbolic-hard}
	Solving one-player symbolic temporal reachability games is $\PSPACE$-hard.
\end{theorem}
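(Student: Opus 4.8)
The plan is to reduce from the satisfiability of a quantified Boolean formula (QSAT), matching the style of the two-player lower bound in \cref{thm:temp-2p} but replacing the antagonistic player by the expressive power of $\exists$PA edge predicates. Given $\Phi = \exists x_1 \forall x_2 \ldots \exists x_n.~\varphi$, I would build a one-player symbolic temporal reachability game $G_\Phi$ in which \PONE selects a truth value for every variable (both existential \emph{and} universal), but where the symbolic edge availability is engineered so that the only way a universal variable can be ``cheated'' is detected and punished by making the target unreachable. The key idea is to encode the current partial assignment into the \emph{time} coordinate: since a single token traversal advances time by one at each step, and $\exists$PA predicates can test arbitrary linear arithmetic conditions (in particular divisibility and residues via repeated doubling), the time value can be used as a compact register recording the choices made so far.

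Concretely, I would assign to each variable $x_i$ a distinct weight $w_i$ (e.g.\ a power of a fixed base large enough to avoid carries/collisions), and arrange the selection gadget so that choosing $x_i \mapsto \true$ versus $x_i \mapsto \false$ causes the token to arrive at the next gadget at a time that differs by $w_i$. After all $n$ selections, the accumulated time encodes the full assignment $\vec{x}$ as a number. The target edge (leading to the reachability goal) is then guarded by an $\exists$PA predicate $\Phi_{u,v}(t)$ that holds exactly at those times $t$ encoding an assignment under which $\varphi$ is \emph{true}. Checking ``$\varphi$ is satisfied by the assignment encoded in $t$'' is a fixed Boolean combination of residue tests on $t$, all of which are expressible in $\exists$PA. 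Since \PONE controls all vertices, she can reach the target iff she can steer the time into the accepting set — which, naively, would only capture the purely existential ($\exists$SAT, i.e.\ $\NP$) case.

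The main obstacle, and the crux of the construction, is enforcing the \emph{universal} quantifier alternation without a second player. The device I would use is to make \PONE commit to universal choices \emph{before} she learns which clause/obligation she must later satisfy, and to force her, in a later phase, to revisit and thereby certify each universal assignment against every possibility — so that reachability survives only if her choice works \emph{for all} settings of the universal variables. Following the flooding-phase intuition of \cref{thm:temp-2p}, I would give a fixed time budget and use symbolic guards to open ``verification'' edges only at times consistent with each universal branch, so that a single play must implicitly account for every universal valuation; the $\exists$PA guards do the bookkeeping that the antagonistic player did in the two-player proof. Getting this encoding exactly right — ensuring that no spurious time value lets \PONE bypass a universal obligation, and that the guards remain within the existential fragment and polynomial size (which is where repeated doubling is essential) — is the delicate part. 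The payoff is that alternation is simulated entirely by arithmetic on the global clock, yielding $\PSPACE$-hardness for a genuinely one-player game, strengthening the two-player lower bound of \cite{ABT24}.
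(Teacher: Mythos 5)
Your starting point coincides with the paper's: both reduce from QSAT and both use the global time as a register whose bits record the truth values chosen so far, with $\exists$PA guards reading those bits off the binary expansion of the current time. Up to that point the proposal is sound, and as you yourself note, it only yields $\NP$-hardness.

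The genuine gap is the mechanism for the universal quantifiers, which you defer as ``the delicate part'' but which is in fact the entire content of the theorem beyond $\NP$-hardness. Two problems with what you sketch. First, having \PONE ``commit'' to the universal choices and then be ``certified against every possibility'' is not a reduction from QSAT: in $\exists x_1\forall x_2\exists x_3\ldots\varphi$ the existential witnesses must be allowed to \emph{depend} on the preceding universal values, so any scheme in which all choices are fixed once and then checked uniformly decides a different (non-adaptive) problem. Second, you give no concrete device by which a \emph{single} play is forced to account for exponentially many universal valuations; the ``flooding phase'' you borrow from \cref{thm:temp-2p} is an explorability gadget (visiting leftover vertices under a time budget) and has no analogue in a pure reachability objective. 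The paper's solution is a backtracking gadget that turns the time into a binary counter over the universal branches: after each successful check of $\widehat\varphi$ at $q_\varphi$, the play is routed back through auxiliary vertices $q_i^a,q_i^b,q_i^c$ whose guards force exactly enough time to elapse to increment the bit $\beta_i$ of the innermost universal variable (with carries cascading into $\alpha_i$ and triggering backtracks to outer universal gadgets), and the play then re-enters the inner existential gadgets, where \PONE may re-choose those variables for the new universal branch. Reaching $q_\top$ requires a path of exactly $2^{4n}$ edges that has passed the $\widehat\varphi$-check once per universal valuation, which is precisely the adaptive $\forall/\exists$ semantics. Without some such enumeration-with-re-choice gadget, your construction cannot get past $\NP$.
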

\begin{proof}
	We reduce QSAT to one-player temporal reachability with symbolic time encoding.
	Consider a quantified Boolean formula $\varPhi = \exists x_1\forall x_2\ldots \exists x_n :\varphi$ where $\varphi$ is a propositional formula with variables in $X=\{x_1, \dots, x_n\}$.
	Without loss of generality assume that the quantifiers alternate strictly, starting and ending with existential quantifiers
	(introduce at most $n+1$ useless variables otherwise).
	We show how to construct the symbolic one-player game $G_\varPhi$ such that $\varPhi$ is satisfiable if and only if some target vertex of $G_\varPhi$ can be reached.
	We use elapse of time to encode valuations of variables of $\varPhi$.
	At a time $\theta\geq0$, since $\varphi$ is quantifier-free, determining whether the valuation $\nu_{\theta} \colon X \to \{0,1\}$ makes $\varPhi$ true can be checked with a single transition that is available exactly at time $\theta$.
	The difficulty arises from encoding of the ``adversarial'' behaviour of universal quantifiers.
	To do so, we leverage both the graph structure and the Presburger arithmetic.

	In our encoding, time is sectored into $n$ segments of 4 bits.
	The horizon of $G_\varPhi$ is thus $h(G_\varPhi)=2^{4n}$.
	For every $i\in\{1, \ldots, n\}$ and time $\theta \in \{0, ..., 2^{4n}-1\}$, we call $\alpha_i$, $\beta_i$, $\gamma_i$, and $\delta_i$ the four bits of the $i$th most significant sector, i.e., the bits at indices $(4n-1)-(4i-4)$, $(4n-1)-(4i-3)$, $(4n-1)-(4i-2)$ and $(4n-1)-(4i-1)$ in the binary expansion of $\theta$.
	In particular, $\alpha_{1}=4n-1$ and $\delta_{n}=0$ are respectively the most and the least significant bits.
	Intuitively, in the sequel we use $\beta_i$ to represent the chosen value of variable $x_i\in X$, we control overflows of $\beta_i$ with assumption on $\alpha_i$ and $\gamma_i$, and $\delta_i$ is an extra bit acting as buffer, to allow spending time without influencing the others.
	In the sequel, we define constrains on such bits.
	To do so, for all $k\in\{0, \ldots, 4n-1\}$ and all $v\in\{0, 1\}$, we require the $k$th least significant bit of a given time $\theta\in\{0, \ldots, 2^{4n}-1\}$ to be $v$. This can be expressed thanks to following Presburger formula, which states that the $k$th least significant bit in the binary expansion of $\theta$ equals $v$.
	$$\Psi_{k,v}(\theta) =
		\exists y_0:\ldots\exists y_{4n}:\exists v_0:\ldots\exists v_{4n-1}:
		\land \begin{cases}
			\bigwedge_{j=0}^{4n-1}(v_j = 0 \lor v_j=1) \hfill\quad \text{\textcolor{gray}{i.e., $v_j \in \{0,1\}$}}
			\smallskip\\
			\bigwedge_{j=0}^{4n-1}(y_{j} = 2y_{j+1} + v_j) \hfill\quad \text{\textcolor{gray}{i.e., $y_{j+1}$ is $\big\lfloor\frac{\theta}{j+1}\big\rfloor$}}
			\smallskip\\
			(y_{0}=\theta) \land (v_k=v) \hfill\quad \text{\textcolor{gray}{and $v_k$ is $\theta$'s $k$th bit}}
		\end{cases}
		$$
	In the following, we write $\xi_i=v$, where $\xi\in \{\alpha,\beta,\gamma,\delta\}$ and $v\in \{0,1\}$
	as a shorthand for the formula for checking the corresponding bits.

	\begin{figure}[t!]\centering
		\begin{minipage}{.35\linewidth}\centering
			\scalebox{1.15}{
				\begin{tikzpicture}[node distance=2cm]
					\tikzset{every label/.style={text=black, font=\small}}
					\node[zoki state,label=center:$\exists_i$, zoki initial, zoki initial style=dashed] (q1) {};
					\node[dashed, zoki state, right=of q1] (q) {};
					\node[opacity=0, zoki state, below=of q] (p) {};

					\path[zoki transition]
					(q1) edge [loop above] node[above] {$\alpha_i=0$} (q1)
					(q1) edge node[below] {$\begin{array}{c}\{\alpha_j=0\}_{j\in\N}\\\{\beta_j=0\}_{j>i}\\\{\gamma_j=0\}_{j\in\N}\\\{\delta_j=0\}_{j>i}\\\delta_i=1\\\end{array}$} (q)
					(q) edge[opacity=0, loop above] node[opacity=0] {$\gamma_i=0$} (q)
					(p) edge[opacity=0, loop left] node[opacity=0] {$\begin{array}{c}=0\\=0\end{array}$} (p)
					;
				\end{tikzpicture}
			}
		\end{minipage}
		\begin{minipage}{.64\linewidth}\centering
			\scalebox{1.15}{
				\begin{tikzpicture}[node distance=2cm]
					\tikzset{every label/.style={text=black, font=\small}}
					\node[zoki state, label=center:$\forall_{i}$, zoki initial, zoki initial style=dashed, zoki initial angle=180] (q2) {};
					\node[zoki state, label=center:$\forall_{i}^{c}$, above=of q2] (q2c) {};
					\node[zoki state, label=center:$\forall_{i}^{b}$, right=of q2c, xshift=-.5cm] (q2b) {};
					\node[zoki state, label=center:$\forall_{i}^{a}$, zoki initial, zoki initial style=dashed, zoki initial angle=-65, right=of q2b, xshift=-.5cm] (q2a) {};
					\node[dashed, zoki state, below=of q2a] (q3) {};
					\node[dashed, zoki state, above=of q2, xshift=-1cm] (q) {};

					\path[zoki transition]
					(q2) edge[loop below] node[left] {$\begin{array}{c}\alpha_i=0\\\delta_i=0\end{array}$} (q2)
					(q2) edge[pos=.75] node[above] {$\begin{array}{c}\{\alpha_j=0\}_{j\in\N}\\\{\beta_{j}=0\}_{j>i}\\\{\gamma_j=0\}_{j\in\N}\\\{\delta_j=0\}_{j>i}\\\delta_i=1\end{array}$} (q3)
					(q2) edge node [left]{$\alpha_i=1$} (q)
					(q2a) edge [loop above] node {$\gamma_{i} = 0$} (q2a)
					(q2a) edge node [above]{$\gamma_{i} = 1$} (q2b)
					(q2b) edge [loop above] node {$\gamma_{i} = 1$} (q2b)
					(q2b) edge node [above]{$\gamma_{i} = 0$} (q2c)
					(q2c) edge [loop above] node {$\gamma_{i} = 0$} (q2c)
					(q2c) edge node [right, xshift=-.2cm]{$\begin{array}{l}\{\alpha_j=0\}_{j\neq i}\\\{\beta_j=0\}_{j>i}\\\{\gamma_j=0\}_{j\in\N}\\\{\delta_j=0\}_{j\geq i}\end{array}$} (q2)
					;
				\end{tikzpicture}
			}
		\end{minipage}
		\vspace*{-.5cm}
		\caption{Gadgets to construct $G_\varPhi$}
		\label{QBF2onePlayerGame:quantifiers}
	\end{figure}

	In $G_\varPhi$, each quantifier will have a corresponding vertex.
	If $x_i\in X$ is existentially quantified, the vertex $q_i$ evaluates $x_i$ thanks to a self-loop which also set $\delta_i=1$. 
	In other words, \PONE can use the self-looping transition of $q_i$ for spending time (not more than $2^{\alpha_i}$ steps since $\alpha_{i}$ must be $0$) which in particular leaves him possibility to set the bit $\beta_{i}$ as an encoding the value of $x_i$.
	Additionally, the non-looping transition of $q_i$ is available only when the bit $\delta_{i}$ is $1$.
	See the left gadget of \figurename~\ref{QBF2onePlayerGame:quantifiers}.
	Observe that, if $q_i$ is entered at time $\theta$ where $\delta_i=0$, there are exactly two times $\theta_0, \theta_1$ when $q_i$ can be exited, and $\theta_0$ and $\theta_1$ only differ on the bit $\beta_i$.
	Formally, $\theta_0=\theta+2^{(4n-1)-(4i-1)}$, $\theta_1=\theta_0+2^{(4n-1)-(4i-3)}$.
	This is a direct consequence of the availability of both outgoing transitions of $q_i$ leaving only the bit $\beta_{i}$ as degree of freedom to \PONE.
	We shall prove that vertices encoding existential quantifiers are the only source of branching in $G_\varPhi$.
	If $x_i\in X$ is universally quantified, the self-loop of the vertex $q_i$ only serves to set $\delta_i=1$, the evaluation of $x_i$ being driven by vertices encoding existential quantifiers and auxiliary vertices $q_i^a$, $q_i^b$ and $q_i^c$.
	See the right gadget of \figurename~\ref{QBF2onePlayerGame:quantifiers}.
	The vertex $q_i$ admits two behaviours.
	If $q_i$ is entered at a time when $\alpha_i=0$, then $q_i$ exits toward the vertex encoding the next existential quantifier.
	Otherwise, $\alpha_i=1$ and then $q_i$ backtracks toward the auxiliary vertices of the previous universal quantifier.
	The elapsing of time while backtracking will set $\gamma_{i-2}$ to 1, and then sets it back to 0.
	If $\beta_{i-2}=0$, the side effect of backtracking is to switch $\beta_{i-2}$ from 0 to 1.
	Otherwise, $\beta_{i-2}=1$, and the side effect of backtracking is to switch $\alpha_{i-2}$ from 0 to 1, which triggers a further backtrack in $q_{i-2}$.
	To end the construction of $G_\varPhi$, we add two more vertices $q_\varphi$ (to check the valuation of $\varPhi$ encoded by elapse of time) and $q_\top$ (to end the cascade of backtrack).
	A complete picture is given when $\varPhi$ has 5 quantifiers in \figurename~\ref{QBF2onePlayerGame:example}.
	The quantifier-free Presburger formula $\widehat{\varphi}$ is defined as $\varphi$ where every positive occurrence $x_i$ in $\varphi$ is replaced by $\beta_i=1$ in $\widehat{\varphi}$ and every negative occurrence $\lnot x_i$ is replaced by $\beta_i=0$.
	The size of $G_\varPhi$, in particular the encoding of edges availability is discussed at the end of the proof.

	\begin{figure}[t!]\centering
		\scalebox{1}{
			\begin{tikzpicture}[node distance=1.92cm]
				\tikzset{every label/.style={text=black, font=\small}}
				\node[draw=red, zoki state, zoki initial, zoki initial angle=-110, label=center:$q_{1}$] (q1) {};
				\node[draw=blue, zoki state, label=center:$q_{2}$, right=of q1] (q2) {};
				\node[draw=orange!50!yellow, zoki state, label=center:$q_{3}$, right=of q2] (q3) {};
				\node[draw=green!50!olive, zoki state, label=center:$q_{4}$, right=of q3] (q4) {};
				\node[draw=violet!50!magenta, zoki state, label=center:$q_{5}$, right=of q4] (q5) {};
				\node[zoki state, label=center:$q_\varphi$, right=of q5] (formula) {};

				\node[draw=green!50!olive, zoki state, label=center:$q_{4}^b$, above=of q5, xshift=-.5cm] (q4b) {};
				\node[draw=green!50!olive, zoki state, label=center:$q_{4}^a$, right=of q4b, xshift=-.5cm] (q4a) {};
				\node[draw=green!50!olive, zoki state, label=center:$q_{4}^c$, above=of q4] (q4c) {};

				\node[draw=blue, zoki state, label=center:$q_{2}^b$, above=of q3, xshift=-.5cm] (q2b) {};
				\node[draw=blue, zoki state, label=center:$q_{2}^a$, right=of q2b, xshift=-.5cm] (q2a) {};
				\node[draw=blue, zoki state, label=center:$q_{2}^c$, above=of q2] (q2c) {};

				\node[zoki state, label=center:$q_\top$, above=of q2, xshift=-1cm] (end) {};

				\path[zoki transition]
				(q1) edge[draw=red, pos=.42] node[below] {$\begin{array}{c}\{\alpha_j=0\}_{j\in\N}\\\{\beta_j=0\}_{j>1}\\\{\gamma_j=0\}_{j\in\N}\\\{\delta_j=0\}_{j>1}\\\delta_1=1\end{array}$} (q2)
				(q2) edge[draw=blue, pos=.6] node[below] {$\begin{array}{c}\{\alpha_j=0\}_{j\in\N}\\\{\beta_{j}=0\}_{j>2}\\\{\gamma_j=0\}_{j\in\N}\\\{\delta_j=0\}_{j>2}\\\delta_2=1\end{array}$} (q3)
				(q3) edge[draw=orange!50!yellow, pos=.42] node[below] {$\begin{array}{c}\{\alpha_j=0\}_{j\in\N}\\\{\beta_j=0\}_{j>3}\\\{\gamma_j=0\}_{j\in\N}\\\{\delta_j=0\}_{j>3}\\\delta_3=1\end{array}$} (q4)
				(q4) edge[draw=green!50!olive, pos=.6] node[below] {$\begin{array}{c}\{\alpha_j=0\}_{j\in\N}\\\{\beta_{j}=0\}_{j>4}\\\{\gamma_j=0\}_{j\in\N}\\\{\delta_j=0\}_{j>4}\\\delta_4=1\end{array}$} (q5)
				(q5) edge[draw=violet!50!magenta, pos=.42] node[below] {$\begin{array}{c}\{\alpha_j=0\}_{j\in\N}\\\{\beta_j=0\}_{j>5}\\\{\gamma_j=0\}_{j\in\N}\\\{\delta_j=0\}_{j>5}\\\delta_5=1\end{array}$} (formula)
				(formula) edge node[below left] {$\widehat{\varphi}$} (q4a)
				(q4a) edge[draw=green!50!olive] node[above] {$\gamma_4=1$} (q4b)
				(q4b) edge[draw=green!50!olive] node[above] {$\gamma_4=0$} (q4c)
				(q4c) edge[draw=green!50!olive] node[right, xshift=-.2cm] {$\begin{array}{c}\{\alpha_j=0\}_{j\neq4}\\\{\beta_j=0\}_{j>4}\\\{\gamma_j=0\}_{j\in\N}\\\{\delta_j=0\}_{j\geq4}\end{array}$} (q4)
				(q4) edge[draw=green!50!olive] node [below left]{$\alpha_4=1$} (q2a)
				(q2a) edge[draw=blue] node[above] {$\gamma_2=1$} (q2b)
				(q2b) edge[draw=blue] node[above] {$\gamma_2=0$} (q2c)
				(q2c) edge[draw=blue] node[right, xshift=-.2cm] {$\begin{array}{c}\{\alpha_j=0\}_{j\neq2}\\\{\beta_j=0\}_{j>2}\\\{\gamma_j=0\}_{j\in\N}\\\{\delta_j=0\}_{j\geq2}\end{array}$} (q2)
				(q2) edge[draw=blue] node [below left]{$\alpha_2=1$} (end)
				(q2) edge [draw=blue, loop below] node {$\begin{array}{c}\alpha_2=0\\\delta_2=0\end{array}$} (q2)
				(q4) edge [draw=green!50!olive, loop below] node {$\begin{array}{c}\alpha_4=0\\\delta_4=0\end{array}$} (q4)
				(q1) edge [draw=red, loop above] node {$\alpha_1=0$} (q1)
				(q3) edge [draw=orange!50!yellow, loop above] node {$\alpha_3=0$} (q3)
				(q5) edge [draw=violet!50!magenta, loop above] node {$\alpha_5=0$} (q5)
				(q4a) edge [draw=green!50!olive, loop above] node {$\gamma_4=0$} (q4a)
				(q4b) edge [draw=green!50!olive, loop above] node {$\gamma_4=1$} (q4b)
				(q4c) edge [draw=green!50!olive, loop above] node {$\gamma_4=0$} (q4c)
				(q2a) edge [draw=blue, loop above] node {$\gamma_2=0$} (q2a)
				(q2b) edge [draw=blue, loop above] node {$\gamma_2=1$} (q2b)
				(q2c) edge [draw=blue, loop above] node {$\gamma_2=0$} (q2c)
				(end) edge [loop above] (end)
				;
			\end{tikzpicture}
		}
		\vspace*{-.5cm}
		\caption{Encoding of a formula with 5 quantifiers to a symbolic one-player game.}
		\label{QBF2onePlayerGame:example}
	\end{figure}

	Next, we prove that $\varPhi$ is satisfiable
	if and only if
	$G_\varPhi$ admits a path from $q_1$ to $q_\top$ composed of exactly $2^{4n}$ edges, thanks to the unrestricted self-loop on $q_{\top}$.
	The argument holds by induction on the odd number $n\geq1$ of quantifiers.
	In the base case, $n=1$ and $G_\varPhi$ has three vertices $q_1, q_{\varphi}$ and $q_\top$.
	We show that when $q_1$ is entered at time $0$, it can be exited only at time $\theta_0=1$ and $\theta_1=5$.
	By construction, $q_1$ can only be exited toward $q_{\varphi}$ that has a single outgoing edge toward $q_{\top}$.
	Hence, the edge from $q_{\varphi}$ to $q_{\top}$ can only be traversed at time $\theta_0+1=2$ or $\theta_1+1=6$ where $\beta_1=0$ and $\beta_1=1$ respectively.
	By definition of $\widehat{\varphi}$, the vertex $q_{\top}$ is reachable if and only if $\varPhi$ is satisfiable.

	Now, assume that $\varPhi$ has an odd number $n\geq3$ of quantifiers.
	Let $\varPhi = \exists x_1\forall x_2\varPhi'$ where $\varPhi' = \exists x_3\forall x_4\ldots \exists x_{n} :\varphi$.
	For all $v_1, v_2 \in \{\texttt{false}, \texttt{true}\}$, we denote $\varPhi'[x_1 \leftarrow v_1, x_2 \leftarrow v_2]$ the formula $\varPhi'$ where $x_1$ takes value $v_1$ and $x_2$ takes value $v_2$.
	Observe that $\varPhi'[x_1 \leftarrow v_1, x_2 \leftarrow v_2]$ is a quantified Boolean formula without free-variable.
	For all $v_1, v_2 \in \{\texttt{false}, \texttt{true}\}$, by induction hypothesis $G_{\varPhi'[x_1 \leftarrow v_1, x_2 \leftarrow v_2]}$ admits a path from $q'_1$ to $q'_\top$ composed of exactly $2^{4(n-2)}$ edges if and only if $\varPhi'[x_1 \leftarrow v_1, x_2 \leftarrow v_2]$ is satisfiable.
	Intuitively, we construct such a path of $G_{\varPhi}$ from $q_1$ to $q_{\top}$ based on a path of $G_{\varPhi'}$ from $q'_1$ to $q'_{\top}$.
	This is effective since $G_{\varPhi'[x_1 \leftarrow v_1, x_2 \leftarrow v_2]}$ is a subgraph of $G_{\varPhi[x_1 \leftarrow v_1, x_2 \leftarrow v_2]}$ where $q_3$ and $q_2^{a}$ in $G_{\varPhi[x_1 \leftarrow v_1, x_2 \leftarrow v_2]}$ match respectively with $q'_1$ and $q'_{\top}$ in $G_{\varPhi'[x_1 \leftarrow v_1, x_2 \leftarrow v_2]}$. 
	In fact, $G_{\varPhi'[x_1 \leftarrow v_1, x_2 \leftarrow v_2]}$ has an horizon of $2^{4(n-2)}$ because $\varPhi'$ has $n-2$ quantifiers,
	and $q_2^{a}$ has a self-loop available at all time in this horizon, that is, $G_{\varPhi'[x_1 \leftarrow v_1, x_2 \leftarrow v_2]}$ cannot modify the bits of bit sectors 1 and 2.
	Also, $q_3$ is entered only when the $4(n-2)$ least significant bits of the time are zero due to the availability of the edge from $q_2$ to $q_3$,
	and $q_2^{a}$ is leaved only if the $4(n-2)$ bits overflow due to as a direct consequence of the availability of the edge from $q_2^{a}$ to $q_2^{b}$.

	We show that when $q_1$ is entered at time $0$, it can be exited only at time
	$\theta_0=2^{4n-4}$, $\theta_1=2^{4n-4}+2^{4n-2}$.
	Then $q_2$ is visited and since in both cases $\alpha_2=0$, by construction, the vertex $q_3$ is visited at time
	$\theta_0+2^{4n-8}$ or $\theta_1+2^{4n-8}$ (where $\delta_2$ switched to $1$).
	Let $\theta'_0=\theta_0+2^{4n-8}$ where $\beta_1=0$ and $\theta'_1=\theta_1+2^{4n-8}$ where $\beta_1=1$.
	In both cases, $\beta_2=0$.
	For all $k\in\{0, 1\}$, by induction hypothesis, there is a path from $q_3$ at time $\theta'_k$ to $q_2^{a}$ at time $\theta'_k+2^{4(n-2)}-1$ if and only if $\varPhi'[x_1 \leftarrow k, x_2 \leftarrow 0]$ is satisfiable.
	If $\varPhi'[x_1 \leftarrow 0, x_2 \leftarrow 0]$ and $\varPhi'[x_1 \leftarrow 1, x_2 \leftarrow 0]$ are not satisfiable, then $G_{\varPhi}$ cannot reach $q_{\top}$ from $q_1$ and $\varPhi$ is not satisfiable.
	Otherwise, the two paths in $G_{\varPhi}$ that reach $q_2^{a}$ at $\theta'_0+2^{4(n-2)}-1$ and $\theta'_1+2^{4(n-2)}-1$ can be extended to reach $q_2$ at time
	$\theta_0+2^{4n-6}$ and $\theta_1+2^{4n-6}$ respectively (where $\beta_2$ switched to $1$).
	Since $\alpha_2=0$ in both cases, the vertex $q_3$ is visited at time
	$\theta_0+2^{4n-6}+2^{4n-8}$ and $\theta_1+2^{4n-6}+2^{4n-8}$ respectively (where switched to $\delta_2=1$).
	Let $\theta''_0=\theta_0+2^{4n-6}+2^{4n-8}$ where $\beta_1=0$ and $\theta''_1=\theta_1+2^{4n-6}+2^{4n-8}$ where $\beta_1=1$.
	For all $v_1, v_2 \in \{\texttt{false}, \texttt{true}\}$, by induction hypothesis, there is a path from $q_3$ at time $\theta''_k$ to $q_2^{a}$ at time $\theta_k+2^{4(n-2)}-1$ if and only if $\varPhi'[x_1 \leftarrow k, x_2 \leftarrow 1]$ is satisfiable.
	If $\varPhi'[x_1 \leftarrow 0, x_2 \leftarrow 1]$ and $\varPhi'[x_1 \leftarrow 1, x_2 \leftarrow 1]$ are not satisfiable, then
	$G_{\varPhi}$ cannot reach $q_{\top}$ and $\varPhi$ is not satisfiable.
	Otherwise, the two paths in $G_{\varPhi}$ that reach $q_2^{a}$ at $\theta''_0+2^{4(n-2)}-1$ and $\theta''_1+2^{4(n-2)}-1$ can be extended to  reach $q_2$ at time
	$\theta_0+2^{4n-5}$ and $\theta_1+2^{4n-5}$ respectively (where $\alpha_2$ switched to $1$).
	Since $\alpha_2=1$, the vertex $q_{\top}$ is visited at time
	$\theta_0+2^{4n-5}+1$ and $\theta_1+2^{4n-5}+1$ respectively.
	Hence, $G_{\varPhi}$ admit a path from $q_1$ to $q_{\top}$ if and only if there is $k\in\{0, 1\}$ such that $\varPhi'[x_1 \leftarrow k, x_2 \leftarrow 0]$ and $\varPhi'[x_1 \leftarrow k, x_2 \leftarrow 1]$ are satisfiable, i.e., if and only if $\varPhi$ is satisfiable.

	It is worth emphasizing that, when $G_\varPhi$ admits a path from $q_1$ to $q_{\top}$, it can be constructed with a memoryless strategy.
	In the above reduction, for each vertex $q_i$ that encodes an existential quantifier, the inductively constructed strategy aims in $q_i$ at evaluating $\beta_i$ such that the next time $q_{\varphi}$ is visited its outgoing edge will be available.
	Hence, in $q_i$, the strategy is solely based on the value $\beta_{i-1}, \ldots, \beta_1$.
	Since the vertices encoding existential quantifiers are the only source of branching in $G_\varPhi$, the strategy is memoryless.

	Finally, we show that the size of $G_\varPhi$ is polynomial in the size of $\varPhi$.
	Here after, the size of the formula corresponds to the number of symbols to write it.
	The size of a symbolic temporal graphs $G$, is defined by $|G| = |V| + \sum_{(u,v)\in V^2} |\edges(u, v)|$.
	The graph $G_\varPhi$ has at most $4n+2$ vertices, and all availability are expressible by a Presburger formula of polynomial size in $|\varPhi|$.
	Since the horizon of $G_\varPhi$ is $2^{4n}$, each edge availability can be expressed as a conjunction of at most $4n$ formulas of the form $\Psi_{k,v}$.
\end{proof}

\begin{corollary}
	Solving reachability games on one-player symbolic temporal graphs with at most $K$ temporal edges is $\Sigma_{m}^{P}$-hard, i.e, hard for the $m$-th level of the polynomial hierarchy, where $K\ge 2\lceil\frac{m}{2}\rceil + 9\lfloor\frac{m}{2}\rfloor+1$.
\end{corollary}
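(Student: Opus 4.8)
The plan is to refine the reduction of \cref{lem:symbolic-hard}, accounting precisely for the number of genuinely time-constrained (\emph{temporal}) edges, and to lift it from single variables to quantifier \emph{blocks}. Recall that the truth problem for quantified Boolean formulae with $m$ alternating quantifier blocks starting with $\exists$, written $\mathrm{QSAT}_m$ and with each block holding polynomially many variables, is $\Sigma_m^P$-complete. It therefore suffices to construct, from such an instance $\varPhi$, a one-player symbolic temporal reachability game $G_\varPhi$ whose only time-constrained edges number at most $2\lceil m/2\rceil + 9\lfloor m/2\rfloor + 1$; every other edge is always available (availability predicate $\true$) and hence does not count towards $K$.

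The key observation is that the two gadgets of \cref{QBF2onePlayerGame:quantifiers} have a \emph{fixed} topology. The existential gadget uses exactly two temporal edges, namely its guarded self-loop and its guarded exit; the universal gadget uses exactly nine, namely its self-loop, its exit, its backtracking edge, and the three-vertex carry-chain $q_i^a,q_i^b,q_i^c$ with their self-loops and connecting transitions. First I would show that each gadget can be reused verbatim, \emph{one per quantifier block} rather than per variable, by widening the associated time-sector from four bits to enough bits to hold all variables of that block together with the control bits $\alpha,\gamma,\delta$. Widening a sector only enlarges the Presburger guards — they remain polynomial-size conjunctions of the formulas $\Psi_{k,v}$ — and introduces no new edges. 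For an existential block, \PONE uses the single self-loop to spend time and thereby freely set the $\beta$-bits encoding the block's valuation, just as in the single-variable case. For a universal block of $p$ variables, the backtracking carry-chain drives the time-counter through all $2^p$ valuations of that block, forcing the embedded subgame to succeed for each of them and thus realising universal quantification.

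With this block-wise construction the edge count is immediate: the existential gadgets contribute $2\lceil m/2\rceil$ temporal edges, the universal gadgets contribute $9\lfloor m/2\rfloor$, and the single formula-evaluation edge out of $q_\varphi$ (guarded by $\widehat\varphi$) contributes the final $+1$, while the self-loop on $q_\top$ is always available and is not counted. Correctness, i.e.\ that $q_\top$ is reachable from $q_1$ iff $\varPhi$ is true, then follows by the same argument as in \cref{lem:symbolic-hard}, now inducting over quantifier blocks rather than individual variables. Any constant overhead incurred when the given instance must first be brought into the strictly alternating shape assumed by the gadgets is absorbed by the one-sided bound $K \ge 2\lceil m/2\rceil + 9\lfloor m/2\rfloor + 1$.

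The hard part will be verifying the first step rigorously: that a single self-loop together with a single exit edge genuinely lets \PONE realise an \emph{arbitrary} valuation of a whole existential block, and dually that the nine-edge carry-chain correctly enumerates \emph{all} $2^p$ valuations of a universal block without carries leaking into neighbouring sectors. This amounts to re-examining the bit-manipulation arithmetic of \cref{lem:symbolic-hard} at the level of multi-bit sectors and checking that the buffer bit $\delta$ and the overflow guards on $\alpha$ and $\gamma$ still confine every carry to its intended sector.
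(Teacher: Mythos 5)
Your proposal is correct and follows essentially the same route as the paper's proof: count the temporal edges per gadget (two for each existential gadget, nine for each universal gadget, plus one for the $\widehat{\varphi}$-guarded edge) and handle multi-variable quantifier blocks by widening the corresponding bit sector with additional $\beta$-bits while keeping the gadget topology, and hence the temporal-edge count, unchanged. The only difference is that you flag the block-level correctness of the bit arithmetic as requiring verification, whereas the paper takes this for granted as a direct extension of \cref{lem:symbolic-hard}.
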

\begin{proof}
	In the proof of \cref{lem:symbolic-hard}, the number of quantifiers in $\varPhi$ determines the number of temporal edges used in the constructed symbolic temporal graph. The gadget to encode existential quantifiers uses two temporal edges and the one for universal quantifiers uses nine temporal edges, and there is one edge for the quantifier-free formula $\varphi$ (see \cref{QBF2onePlayerGame:quantifiers}).
	Note that if a quantifier block contains more than $m$ variables, we can blow up the number of bits polynomially by having bits $\beta_{i}^{j}$ for $1\le j \le m$, while keeping the number of temporal edges the same, thus obtaining a reduction from QSAT with a fixed number of quantifiers.
\end{proof}
A similar $\Pi_{m}^{P}$-hardness also holds for a different choice of $m$. Note that this is in contrast with the \PSPACE-hardness proof for two-player reachability games on symbolic temporal graphs, where the problem is $\PSPACE$-hard even with just one temporal edge~\cite[Theorem 1]{FH12}.
\medskip

A restriction that makes the problem easier is allowing \emph{waiting}. It is typical in study of temporal graphs to allow waiting for an arbitrary amount of time in temporal paths, i.e., instead of having to take an edge at every timestep, edges are traversed at increasing (but not necessarily consecutive) timesteps.
This corresponds to having a loop on all vertices that can be traversed at any time. In case of two-player games, this allows \PTWO to stall infinitely and therefore is not interesting.
The following remark states that the explorability on a one-player temporal arena with symbolic encoding becomes easy if waiting is allowed.
We show that this holds even for generalized reachability objectives.

\begin{theorem} \label{rem:1p-waiting}
	Assuming symbolic encodings, solving one-player games on temporal graphs
	with waiting is $\NP$-complete for reachability, explorability and generalized reachability.
\end{theorem}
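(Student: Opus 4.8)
The plan is to reduce both the temporal and the symbolic features to the satisfiability problem for $\exists$PA, which is $\NP$-complete. Since reachability (towards a single set $F$) and explorability are both special cases of generalized reachability (one target set, respectively one singleton target set per vertex), it suffices to prove membership in $\NP$ for generalized reachability and $\NP$-hardness for reachability/explorability.

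First, the key reformulation. Because waiting is available (a self-loop at every vertex, present at all times), the only data that matters about a \PONE play is its sequence of \emph{genuine} (non-loop) moves together with the times at which they are taken. Concretely, a finite walk $w_0=\init, w_1,\dots,w_L$ is realizable by \PONE if and only if there are strictly increasing traversal times $t_0<t_1<\dots<t_{L-1}$ with each genuine edge available, i.e. $\Phi_{w_i,w_{i+1}}(t_i)$ holding: between two genuine moves \PONE idles on the self-loop, and strict increase records that each step advances global time by one. Thus the objective amounts to finding a walk that hits the required targets and for which such times exist.

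For the \textbf{upper bound} the plan is to show that a winning walk, if one exists, can be taken of polynomial length, and then to test its realizability with one polynomial-size $\exists$PA query. For the length bound I would fix, for each target set $F_i$, one position at which the walk visits a vertex of $F_i$; these at most $k$ anchor positions, together with the start, split the (truncated) walk into at most $k$ segments. Within any segment I delete every cycle: with waiting this never destroys realizability, since after shortcutting one reaches the repeated vertex no later than before, and as all subsequent traversal times are strictly larger, \PONE can simply wait until each originally-used time. Each segment thus becomes a simple path of length $\le \card{\states}-1$, giving a witness walk of length $\le k\cdot\card{\states}$, which is polynomial (with $k=\card{\states}$ for explorability). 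The $\NP$ algorithm then guesses such a walk $w_0,\dots,w_L$, checks in polynomial time that it meets every $F_i$, and checks satisfiability of
$$\Xi \;=\; \exists t_0\cdots\exists t_{L-1}:\ \Big(\bigwedge_{i=0}^{L-2} t_i<t_{i+1}\Big)\ \wedge\ \bigwedge_{i=0}^{L-1}\Phi_{w_i,w_{i+1}}(t_i).$$
As each $\Phi_{w_i,w_{i+1}}$ is an $\exists$PA formula and there are only polynomially many of them, $\Xi$ is an $\exists$PA formula of polynomial size, and testing its satisfiability is in $\NP$ (satisfying values may be exponential but have polynomial bit-length); hence the whole procedure is in $\NP$.

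For the \textbf{lower bound} the plan is a direct reduction from $\exists$PA satisfiability. Given a closed $\exists$PA instance $\Theta$, build the two-vertex arena with initial vertex $s$, target $t$, self-loops (waiting) on both, and a single genuine edge $s\to t$ whose availability predicate $\Phi_{s,t}(x):=\Theta$ ignores the time variable $x$. Then $\edges(s,t)$ equals $\N$ if $\Theta$ holds and $\emptyset$ otherwise, so $t$ is reachable (equivalently, $\{s,t\}$ is explorable) if and only if $\Theta$ is satisfiable; this gives $\NP$-hardness for reachability and explorability, and hence for generalized reachability since reachability is the special case $k=1$. The main obstacle I anticipate is the polynomial length bound, i.e. justifying that shortcutting is sound in the temporal setting. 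This is precisely the step that fails without waiting, where tight timing can force exponentially long, non-simple exploring walks underlying the $\PSPACE$-hardness of \cref{lem:symbolic-hard}; the argument must therefore lean explicitly on the ability to idle and postpone each traversal to its original time. Everything else is packaging the temporal constraints into a single $\exists$PA formula and invoking the $\NP$ bound for its satisfiability.
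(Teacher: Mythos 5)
Your proposal is correct and follows essentially the same route as the paper: the lower bound is the identical two-vertex reduction from $\exists$PA satisfiability, and the upper bound rests on the same polynomial-length witness argument (segments between first visits to targets can be made simple because waiting lets you postpone each later traversal to its original time), followed by guessing the walk and its traversal times and invoking the $\NP$ bound for $\exists$PA. Your write-up is somewhat more explicit than the paper's about why cycle removal is sound and about packaging the timing constraints into a single $\exists$PA query, but the underlying ideas coincide.
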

\begin{proof}
	The $\NP$ lower bound follows from checking satisfiability of an existential Presburger formula, which is known to be $\NP$-complete~\cite{S84}.
	Given a $\exists$PA formula $\phi$ with no free variables, consider the symbolic temporal graph $\states=\{s,t\}$, such that
	$E(s,t)\eqdef\phi(x)$ is the only available edge, where $x$ is a variable that does not occur in $\phi$. The vertex $t$ is reachable from $s$ if, and only if,
	$\phi$ is true.
	For the upper bound, note that the length of a shortest witnessing path, measured by the number of edges traversed,
	is at most quadratic in the number of vertices of the graph.
	The $\NP$ upper bound can be shown by guessing the vertices visited along a play, together with the time they are visited first.
	The intermediate path between two such vertices can only visit linearly many vertices as no vertex is repeated in such a path.
	If a revisit is necessary, there is another path which waits instead.
	If a path does exist, then the time at which it is visited will at most be exponentially large and consequentially can be guessed in polynomial time.
	Therefore, by guessing the path and the times at which corresponding edges are taken, we obtain an $\NP$ algorithm.
\end{proof}

\begin{remark}
	In~\cite{FJ13}, Fearnley and Jurdzi\'nski introduce counter-stack automata to prove that reachability in two-clock timed automata is $\PSPACE$-complete.
	The model of counter-stack automata bears similarity with the technique used in proof of \cref{lem:symbolic-hard}.
	Indeed, both rely on a linear ordering of counters (being bits in our model) and that incrementation of a counter can only be performed when the value of all smaller counters is known.
	When elapse of time sets a bit to 1, the smaller bits are reset to 0.
	In~\cite{FJ13} counters can be reset, which correspond to reset of clocks.
	Our proof cannot be derived from~\cite{FJ13} because time cannot be reset in our setting.
	Vice versa, our proof uses existential Presburger formula with arbitrarily many variables which is not known to be expressible with a two-clock automata.
\end{remark}

\begin{theorem}\label{lem:symbolic-easy}
	Solving two-player symbolic temporal generalized reachability games is in $\EXP$.
	It is in $\PSPACE$ for one-player games.
\end{theorem}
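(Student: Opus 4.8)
The plan is to reduce solving a generalized reachability game on a symbolic temporal graph $G=(\states,\edges)$ to solving an ordinary reachability game on a product arena that, although of exponential size, can be handled within the stated budgets. Recall that the game is played on the expansion of $G$, which is layered by time: every move takes the token from a copy $(u,\theta)$ to some $(v,\theta{+}1)$ with $\theta\in\edges(u,v)$. For generalized reachability towards $F_1,\ldots,F_k$, I would augment each configuration with the set $S\subseteq\{1,\ldots,k\}$ of target sets already met, so that a configuration is a triple $(v,\theta,S)$ and the objective becomes the plain reachability objective ``$S=\{1,\ldots,k\}$''. The successors of $(v,\theta,S)$ are the $(w,\theta{+}1,S\cup\{i\mid w\in F_i\})$ with $\theta\in\edges(v,w)$, and ownership is inherited from $v$. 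By \cite{GS66} each set $\edges(v,w)$ is ultimately periodic with base and period bounded by $2^{O(\size{G})}$, so the edge availability of the whole graph is eventually periodic; hence it suffices to let $\theta$ range over an effective horizon $T\le 2^{O(\size{G})}$ (together with a residue modulo the common period). This keeps the number of reachable configurations bounded by $\size{\states}\cdot 2^{O(\size{G})}\cdot 2^{k}$, i.e.\ at most exponential.

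For the two-player bound I would build this product reachability game explicitly. Each configuration has a description of polynomial length (the vertex, the $O(\size{G})$ bits of $\theta$, and the $k$ bits of $S$), and deciding whether one configuration is a successor of another amounts to evaluating the $\exists$PA predicate $\Phi_{v,w}$ at the point $\theta$; since evaluation at a fixed valuation reduces to $\exists$PA satisfiability (recalled to be in $\NP$), it is decidable deterministically in exponential time, which is within budget. It is worth noting \emph{why} the augmentation by $S$ is used rather than invoking the $\PSPACE$ algorithm of \cite{FH12} directly on the expansion: the expansion already has exponential size, so running a $\PSPACE$-in-the-graph-size procedure would only give $\EXPSPACE$. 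Converting to plain reachability lets me instead use the usual attractor (backward-induction) computation, which runs in time \emph{polynomial} in the size of the game and hence in $\EXP$ overall; the extra factor $2^{k}$ paid by the augmentation is harmless since an exponential is already paid for the horizon.

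For the one-player bound I would avoid materialising the product. Here \PONE wins iff there is a play reaching a configuration with $S=\{1,\ldots,k\}$, i.e.\ a path in the product graph. Because each move advances time by one and, by the periodicity of the edge predicates \cite{GS66}, no target can first be met beyond the effective horizon $T\le 2^{O(\size{G})}$, such a path (if it exists) has at most $T$ steps. A nondeterministic procedure therefore guesses the play one move at a time, storing only the current configuration and a binary step counter up to $T$: at each step it guesses the next vertex $w$, verifies $\theta\in\edges(v,w)$ by checking $\Phi_{v,w}(\theta)$ (in $\NP\subseteq\PSPACE$), updates $S$ and $\theta$, and accepts once $S=\{1,\ldots,k\}$. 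This uses only polynomial space, so the problem lies in nondeterministic polynomial space, hence in $\PSPACE$ by Savitch's theorem. Since reachability and explorability are special cases of generalized reachability, the same bounds apply to them.

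The main obstacle is the exponential (and possibly infinite but ultimately periodic) horizon: one must argue that restricting attention to the exponential effective horizon $T$ loses no winning plays, which is exactly where the periodicity of Presburger-definable sets \cite{GS66} is invoked, and one must verify that evaluating the edge predicates at exponentially large times stays inside $\EXP$ (two-player) and $\PSPACE$ (one-player). The remaining bookkeeping—maintaining $S$, composing the attractor computation on the product arena, and bounding the nondeterministic search by $T$—is routine once these points are settled.
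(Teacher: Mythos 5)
Your proposal is correct and follows essentially the same route as the paper: both reduce to a plain reachability game on an exponential-size product that tracks the current vertex, the time folded into the ultimately periodic base-plus-period range guaranteed by \cite{GS66}, and a record of progress towards the targets, then solve that product in polynomial time (two players) or by an on-the-fly nondeterministic polynomial-space search (one player). The only cosmetic difference is that you record the subset of target indices already met ($2^k$ states) where the paper records the subset of visited vertices ($2^{\size{\states}}$ states); both are at most exponential and the argument is otherwise identical.
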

\begin{proof}
	Consider a generalized reachability game $G$ on arena $\arena = (\states_1, \states_2, E)$
	and with targets $\mathcal{F}= F_1,\dots F_k$.
	Note that under symbolic encodings, temporal graphs are ultimately periodic, meaning that
	there are $b,p\in\N$ so that
	for every $i>b$, $i\in\edges(u,v)$ if, and only if, $i+p\in\edges(u,v)$.
	This is because the set of times at which any given edge is available is defined using a Presburger formula and thus semilinear \cite{GS66}.
	Moreover, the least common multiple of the periods of the individual edges is therefore a period of (all edges in) the temporal graph
	and we can bound $b$ and $p$ to be at most exponential in the size of the input.

	We can therefore construct an at most exponentially larger reachability game $G'$ 
	in which control states record the set of states (of $G$) that have already been visited, and the time up to and within the period.
	That is, define the arena $\arena' = (\states_1',\states_2',\edges')$ where
	$\states' = \states\times2^\states\times \{0,1,\dots, b+p-1\}$ and with edge relation $\edges'$ so that
	$(u,S,\temp)\step{}(v,S\cup\{v\},\temp')$
	if
	$\temp\in \edges(u,v)$ and
	either $\temp'=\temp+1$ or both $\temp= b+p-1 \text{ and } \temp'=b$.
	\PONE owns a state $(u,S,\temp)$ in $G'$ iff she owns $u$ in $G$.
	Finally, define the reachability target set in the constructed game $G'$ as
	\[F' = \{(v,S,\temp)\in \states': \forall F_j \in \mathcal{F}, \exists v \in F_j, \text{ such that } v\in S\}.\]
	Notice that any play from $(v,\emptyset,0)$ in $G'$ uniquely corresponds to a play from state $s$ at time $0$ in the original game $G$.
	The second components within control states in $G'$ are non-decreasing and record the set of states visited by the corresponding play in $G$.
	By definition of our reachability target $F'$, we see that
	a play is winning for \PONE in the constructed reachability game $G'$
	if, and only if, the corresponding play is winning for \PONE
	in the generalized reachability game $G$.

	Note that the size of $G'$ is $|\states|\cdot |2^{\states}|\cdot (b+p)$, where $b,p \le 2^{\?O(\abs{G})}$.
	The claim thus follows from the (known) facts that
	two-, and one-reachability games can be solved in polynomial time
	and logarithmic space, respectively.
\end{proof}

\begin{corollary}\label{lem:symbolic-all}
	Assuming symbolic encodings,
	solving one-player games on temporal graphs is $\PSPACE$-complete
	for reachability, explorability and generalized reachability.
	Solving two-player games on temporal graphs is $\PSPACE$-complete for reachability and $\PSPACE$-hard and in
	$\EXP$ for explorability and generalized reachability.
\end{corollary}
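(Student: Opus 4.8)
The plan is to assemble the stated bounds from the two technical results of this section, \cref{lem:symbolic-hard,lem:symbolic-easy}, from the external two-player reachability bound of \cite{ABT24}, and from the syntactic inclusions between the three winning conditions: reachability is the special case of generalized reachability with a single target set, and explorability is the special case in which every vertex forms a singleton target set.

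I would first collect the upper bounds. \Cref{lem:symbolic-easy} places one-player generalized reachability in \PSPACE\ and two-player generalized reachability in \EXP. Since reachability and explorability are both special cases of generalized reachability, the same membership bounds apply to them, giving membership in \PSPACE\ for all three one-player problems and in \EXP\ for all three two-player problems. The only entry requiring a sharper upper bound is two-player reachability, for which I would invoke the \PSPACE\ bound of \cite[Theorem 2]{ABT24}; \cref{lem:symbolic-easy} on its own yields only \EXP\ there.

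Next I would collect the lower bounds, all of which descend from \cref{lem:symbolic-hard} (one-player symbolic temporal reachability is \PSPACE-hard). Generalized reachability subsumes reachability directly, so it inherits \PSPACE-hardness. For explorability I would reduce from reachability as in \cref{lem:reach-to-explore}, which preserves the one-player property. Finally, every one-player game is the special case of a two-player game in which \PTWO\ owns no vertex, so all three lower bounds transfer unchanged to the two-player setting. Combining the two directions yields \PSPACE-completeness for the three one-player problems and for two-player reachability, and the ``\PSPACE-hard and in \EXP'' statement for two-player explorability and generalized reachability.

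The step needing the most care is the explorability lower bound. The reduction of \cref{lem:reach-to-explore} is stated for ordinary arenas, and one cannot simply apply it to the (exponentially large) expansion without losing the polynomial blow-up a reduction requires. Carried out natively on a symbolic temporal graph it splits every edge, which doubles traversal times in the expansion and thereby disturbs the delicate bit-encoding of times in $G_\varPhi$ from \cref{lem:symbolic-hard}. I would therefore realise the reduction by appending, after the reachability target $q_\top$, a short temporal flooding phase in the spirit of \cref{thm:temp-2p}, whose added edges are available over a polynomially extended horizon that is just long enough to visit every remaining vertex if and only if $q_\top$ is reached. The obstacle is to give these edges $\exists$PA availabilities that respect the consecutive-time semantics of the expansion while keeping the whole construction of polynomial size. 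The residual \PSPACE--\EXP\ gap for two-player explorability and generalized reachability is not closed here and remains open.
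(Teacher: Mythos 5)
Your assembly is correct and is essentially the paper's own (implicit) argument: the corollary is stated without a written proof and follows exactly as you describe from \cref{lem:symbolic-hard}, \cref{lem:symbolic-easy}, the \PSPACE{} bound for two-player reachability from \cite{ABT24}, and the specialisation/inclusion relations among the three objectives. The one point you flag — that \cref{lem:reach-to-explore} does not port to temporal graphs because returning to $s$ does not reset time and edge subdivision doubles traversal times — is a real gap the paper glosses over with its ``reachability and thus explorability'' phrasing, and your flooding-phase fix does close it: append to $G_\varPhi$ a simple chain through all vertices starting at $q_\top$ whose edges are available only at times strictly greater than the horizon $2^{4n}$ (a constant-size $\exists$PA guard), use the unrestricted self-loop on $q_\top$ to wait until the chain opens, and observe that exploration forces a visit to $q_\top$ while the new edges, being outgoing from $q_\top$ or active only after the horizon, cannot create new paths into $q_\top$; this keeps the construction polynomial and one-player.
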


\section{Conclusion}
\label{sec:conclusion}
We study the complexities of solving explorability games on temporal graphs,
splitting into several cases based on the way temporal edges are defined and whether one or two-players are active.

First, we transfer results for games on static graphs and show that there, solving explorability games is $\NL$-complete for the one-player and $\P$-complete for the two-player case.

On explicitly represented temporal graphs in which waiting is permitted, it was known that checking the existence of an exploring path, i.e.~the solving one-player explorability games is $\NP$-complete \cite{Michail2015}. We show here that even for the more general case where waiting is not always allowed the problem remains in $\NP$.
We further show that solving two-player games on explicitly represented temporal graphs is $\PSPACE$-complete
for reachability, explorability, and generalized reachability conditions.
The existing lower bound for reachability \cite{ABT24} crucially relies on binary encodings and the presence of an antagonistic player.

The main technical contribution of this work is a strong lower bound for symbolically represented temporal graphs:
We show that already single-player reachability (and thus explorability) games are $\PSPACE$-hard. Our construction uses specially crafted Presburger constraints to ensure that all possible choices of an ``opponent'' in a QBF game are verified. 
Towards upper bounds, we show that even two-player generalized reachability games are solvable in deterministic exponential time, which is better than the exponential \emph{space} procedure one gets for free (by encoding time into the state space) but not quite matching our lower bound. We conjecture that indeed, solving two-player games for each of these three objectives remains in $\PSPACE$, mainly because raising the lower bound remained elusive. To show $\EXP$-hardness one needs to encode non-trivial and recoverable information into timestamps.
Our attempts to reduce from CTL Satisfiability or Countdown Games so far required to either construct Presburger constraints to compare bits that are exponentially far apart in the binary expansion of a given free variable (time), or to construct gadgets that ``copy'' bitstrings from lower to higher significant bit positions. It is worth noting that any such construction must only be correct for explorability but not for the simpler reachability conditions (which are $\PSPACE$-complete \cite{ABT24}).

On the other hand, showing $\PSPACE$-membership would require either
find a polynomially bounded untimed game gadgets to evaluate any given $\exists$PA formula
or to encode the antagonistic behaviours arithmetically.

\bibliography{bib/journals,bib/conferences,bib/references.clean}

\begin{thebibliography}{10}

\bibitem{GAMES2002}
{\em Automata Logics, and Infinite Games: A Guide to Current Research}.
\newblock Springer-Verlag, 2002.

\bibitem{Adamson22}
Duncan Adamson, Vladimir~V. Gusev, Dmitriy Malyshev, and Viktor Zamaraev.
\newblock {Faster Exploration of Some Temporal Graphs}.
\newblock In {\em 1st Symposium on Algorithmic Foundations of Dynamic Networks
  (SAND 2022)}, volume 221 of {\em Leibniz International Proceedings in
  Informatics (LIPIcs)}, pages 5:1--5:10. Schloss Dagstuhl -- Leibniz-Zentrum
  f{\"u}r Informatik, 2022.
\newblock \href {https://doi.org/10.4230/LIPIcs.SAND.2022.5}
  {\path{doi:10.4230/LIPIcs.SAND.2022.5}}.

\bibitem{AKRIDA2021179}
Eleni~C. Akrida, George~B. Mertzios, Paul~G. Spirakis, and Christoforos
  Raptopoulos.
\newblock The temporal explorer who returns to the base.
\newblock {\em Journal of Computer and System Sciences}, 120:179--193, 2021.
\newblock \href {https://doi.org/https://doi.org/10.1016/j.jcss.2021.04.001}
  {\path{doi:https://doi.org/10.1016/j.jcss.2021.04.001}}.

\bibitem{AMGZ20}
Eleni~C. Akrida, George~B. Mertzios, Paul~G. Spirakis, and Viktor Zamaraev.
\newblock Temporal vertex cover with a sliding time window.
\newblock {\em Journal of Computer and System Sciences}, 107:108--123, 2020.
\newblock \href {https://doi.org/https://doi.org/10.1016/j.jcss.2019.08.002}
  {\path{doi:https://doi.org/10.1016/j.jcss.2019.08.002}}.

\bibitem{AD1994}
Rajeev Alur and David~L. Dill.
\newblock A theory of timed automata.
\newblock {\em Theoretical Computer Science}, 126(2):183 -- 235, 1994.
\newblock \href {https://doi.org/10.1016/0304-3975(94)90010-8}
  {\path{doi:10.1016/0304-3975(94)90010-8}}.

\bibitem{AB09}
Sanjeev Arora and Boaz Barak.
\newblock {\em Computational Complexity: A Modern Approach}.
\newblock Cambridge University Press, 1st edition, 2009.

\bibitem{ABT24}
Pete Austin, Sougata Bose, and Patrick Totzke.
\newblock Parity games on temporal graphs.
\newblock In {\em International Conference on Foundations of Software Science
  and Computational Structures}, pages 79--98. Springer Nature Switzerland,
  2024.

\bibitem{Bumpus21}
Benjamin~Merlin Bumpus and Kitty Meeks.
\newblock Edge exploration of temporal graphs.
\newblock In {\em Combinatorial Algorithms}, pages 107--121. Springer
  International Publishing, 2021.

\bibitem{CHP2011}
Krishnendu Chatterjee, Thomas~A. Henzinger, and Vinayak~S. Prabhu.
\newblock {Timed Parity Games: Complexity and Robustness}.
\newblock {\em Logical Methods in Computer Science}, {Volume 7, Issue 4},
  December 2011.
\newblock \href {https://doi.org/10.2168/LMCS-7(4:8)2011}
  {\path{doi:10.2168/LMCS-7(4:8)2011}}.

\bibitem{Chen23}
Hongjiang Chen, Pengfei Jiao, Huijun Tang, and Huaming Wu.
\newblock Temporal graph representation learning with adaptive augmentation
  contrastive.
\newblock In {\em Machine Learning and Knowledge Discovery in Databases:
  Research Track}, pages 683--699. Springer Nature Switzerland, 2023.

\bibitem{CHISTIKOV2018147}
Dmitry Chistikov, Christoph Haase, and Simon Halfon.
\newblock Context-free commutative grammars with integer counters and resets.
\newblock {\em Theoretical Computer Science}, 735:147--161, 2018.
\newblock Reachability Problems 2014: Special Issue.
\newblock URL:
  \url{https://www.sciencedirect.com/science/article/pii/S0304397516302535},
  \href {https://doi.org/https://doi.org/10.1016/j.tcs.2016.06.017}
  {\path{doi:https://doi.org/10.1016/j.tcs.2016.06.017}}.

\bibitem{DFS2023}
Jean-Lou De~Carufel, Paola Flocchini, Nicola Santoro, and Fr{\'e}d{\'e}ric
  Simard.
\newblock Cops {\&} robber on periodic temporal graphs: Characterization
  and improved bounds.
\newblock In {\em Structural Information and Communication Complexity}, pages
  386--405. Springer Nature Switzerland, 2023.

\bibitem{DAE23}
Argyrios Deligkas, Eduard Eiben, and George Skretas.
\newblock Minimizing reachability times on temporal graphs via shifting labels.
\newblock In {\em Proceedings of the Thirty-Second International Joint
  Conference on Artificial Intelligence, {IJCAI-23}}, pages 5333--5340.
  International Joint Conferences on Artificial Intelligence Organization, 8
  2023.
\newblock \href {https://doi.org/10.24963/ijcai.2023/592}
  {\path{doi:10.24963/ijcai.2023/592}}.

\bibitem{DRM12}
Stefan Dobrev, Rastislav Kr{\'a}lovi{\v{c}}, and Euripides Markou.
\newblock Online graph exploration with advice.
\newblock In {\em Structural Information and Communication Complexity}, pages
  267--278. Springer Berlin Heidelberg, 2012.

\bibitem{EHK21}
Thomas Erlebach, Michael Hoffmann, and Frank Kammer.
\newblock On temporal graph exploration.
\newblock {\em Journal of Computer and System Sciences}, 119:1--18, 2021.
\newblock \href {https://doi.org/https://doi.org/10.1016/j.jcss.2021.01.005}
  {\path{doi:https://doi.org/10.1016/j.jcss.2021.01.005}}.

\bibitem{FJ13}
John Fearnley and Marcin Jurdzi{\'{n}}ski.
\newblock Reachability in two-clock timed automata is pspace-complete.
\newblock In {\em Automata, Languages, and Programming}, pages 212--223.
  Springer Berlin Heidelberg, 2013.

\bibitem{GAMES2023}
Nathanaël Fijalkow, Nathalie Bertrand, Patricia Bouyer-Decitre, Romain
  Brenguier, Arnaud Carayol, John Fearnley, Hugo Gimbert, Florian Horn, Rasmus
  Ibsen-Jensen, Nicolas Markey, Benjamin Monmege, Petr Novotný, Mickael
  Randour, Ocan Sankur, Sylvain Schmitz, Olivier Serre, and Mateusz Skomra.
\newblock Games on graphs, 2023.
\newblock \href {http://arxiv.org/abs/2305.10546} {\path{arXiv:2305.10546}}.

\bibitem{FH12}
Nathanaël Fijalkow and Florian Horn.
\newblock The surprizing complexity of generalized reachability games, 2012.
\newblock \href {http://arxiv.org/abs/1010.2420} {\path{arXiv:1010.2420}}.

\bibitem{FMS2009}
Paola Flocchini, Bernard Mans, and Nicola Santoro.
\newblock Exploration of periodically varying graphs.
\newblock In {\em Algorithms and Computation}, pages 534--543. Springer Berlin
  Heidelberg, 2009.

\bibitem{FGJ24}
Janosch Fuchs, Christoph Gr{\"u}ne, and Tom Jan{\ss}en.
\newblock The complexity of online graph games.
\newblock In {\em SOFSEM 2024: Theory and Practice of Computer Science}, pages
  269--282. Springer Nature Switzerland, 2024.

\bibitem{GS66}
Seymour Ginsburg and Edwin~H. Spanier.
\newblock {Semigroups, Presburger formulas, and languages.}
\newblock {\em Pacific Journal of Mathematics}, 16(2):285 -- 296, 1966.

\bibitem{HJ19}
Petter Holme and Jari Saramäki.
\newblock {\em Temporal Network Theory}.
\newblock 01 2019.
\newblock \href {https://doi.org/10.1007/978-3-030-23495-9}
  {\path{doi:10.1007/978-3-030-23495-9}}.

\bibitem{JT2007}
Marcin Jurdzi{\'{n}}ski and Ashutosh Trivedi.
\newblock Reachability-time games on timed automata.
\newblock In {\em International Colloquium on Automata, Languages and
  Programming}, pages 838--849. Springer Berlin Heidelberg, 2007.

\bibitem{KLO2010}
Fabian Kuhn, Nancy Lynch, and Rotem Oshman.
\newblock Distributed computation in dynamic networks.
\newblock In {\em Symposium on Theory of Computing}, STOC '10, page 513–522.
  Association for Computing Machinery, 2010.
\newblock \href {https://doi.org/10.1145/1806689.1806760}
  {\path{doi:10.1145/1806689.1806760}}.

\bibitem{MPS1995}
Oded Maler, Amir Pnueli, and Joseph Sifakis.
\newblock On the synthesis of discrete controllers for timed systems.
\newblock In {\em International Symposium on Theoretical Aspects of Computer
  Science}, pages 229--242. Springer Berlin Heidelberg, 1995.

\bibitem{MMS11}
Nicole Megow, Kurt Mehlhorn, and Pascal Schweitzer.
\newblock Online graph exploration: New results on old and new algorithms.
\newblock In {\em Automata, Languages and Programming}, pages 478--489.
  Springer Berlin Heidelberg, 2011.

\bibitem{MMN23}
George~B. Mertzios, Hendrik Molter, Rolf Niedermeier, Viktor Zamaraev, and
  Philipp Zschoche.
\newblock Computing maximum matchings in temporal graphs.
\newblock {\em Journal of Computer and System Sciences}, 137:1--19, 2023.
\newblock \href {https://doi.org/https://doi.org/10.1016/j.jcss.2023.04.005}
  {\path{doi:https://doi.org/10.1016/j.jcss.2023.04.005}}.

\bibitem{MMZ21}
George~B. Mertzios, Hendrik Molter, and Viktor Zamaraev.
\newblock Sliding window temporal graph coloring.
\newblock {\em Journal of Computer and System Sciences}, 120:97--115, 2021.
\newblock \href {https://doi.org/https://doi.org/10.1016/j.jcss.2021.03.005}
  {\path{doi:https://doi.org/10.1016/j.jcss.2021.03.005}}.

\bibitem{M24}
George~B. Mertzios, Sotiris Nikoletseas, Christoforos Raptopoulos, and Paul~G.
  Spirakis.
\newblock {Brief Announcement: On the Existence of $\delta$-Temporal Cliques in
  Random Simple Temporal Graphs}.
\newblock In {\em 3rd Symposium on Algorithmic Foundations of Dynamic Networks
  (SAND 2024)}, volume 292 of {\em Leibniz International Proceedings in
  Informatics (LIPIcs)}, pages 27:1--27:5. Schloss Dagstuhl -- Leibniz-Zentrum
  f{\"u}r Informatik, 2024.
\newblock \href {https://doi.org/10.4230/LIPIcs.SAND.2024.27}
  {\path{doi:10.4230/LIPIcs.SAND.2024.27}}.

\bibitem{M2015}
Othon Michail.
\newblock {\em An Introduction to Temporal Graphs: An Algorithmic Perspective},
  pages 308--343.
\newblock Springer International Publishing, 2015.
\newblock \href {https://doi.org/10.1007/978-3-319-24024-4_18}
  {\path{doi:10.1007/978-3-319-24024-4_18}}.

\bibitem{Michail2015}
Othon Michail.
\newblock {\em An Introduction to Temporal Graphs: An Algorithmic Perspective},
  pages 308--343.
\newblock Springer International Publishing, 2015.
\newblock \href {https://doi.org/10.1007/978-3-319-24024-4_18}
  {\path{doi:10.1007/978-3-319-24024-4_18}}.

\bibitem{MCS2014}
Othon Michail, Ioannis Chatzigiannakis, and Paul~G Spirakis.
\newblock Causality, influence, and computation in possibly disconnected
  synchronous dynamic networks.
\newblock {\em Journal of Parallel and Distributed Computing},
  74(1):2016--2026, 2014.

\bibitem{MS14}
Othon Michail and Paul~G. Spirakis.
\newblock Traveling salesman problems in temporal graphs.
\newblock {\em Theoretical Computer Science}, 634:1--23, 2016.
\newblock \href {https://doi.org/https://doi.org/10.1016/j.tcs.2016.04.006}
  {\path{doi:https://doi.org/10.1016/j.tcs.2016.04.006}}.

\bibitem{Pelc18}
Andrzej Pelc.
\newblock Explorable families of graphs.
\newblock In {\em Structural Information and Communication Complexity}, pages
  165--177. Springer International Publishing, 2018.

\bibitem{PR1989}
A.~Pnueli and R.~Rosner.
\newblock On the synthesis of a reactive module.
\newblock In {\em Annual Symposium on Principles of Programming Languages},
  POPL '89, page 179–190. Association for Computing Machinery, 1989.
\newblock \href {https://doi.org/10.1145/75277.75293}
  {\path{doi:10.1145/75277.75293}}.

\bibitem{P1977}
Amir Pnueli.
\newblock The temporal logic of programs.
\newblock In {\em Annual Symposium on Foundations of Computer Science}, SFCS
  '77, page 46–57. IEEE Computer Society, 1977.
\newblock \href {https://doi.org/10.1109/SFCS.1977.32}
  {\path{doi:10.1109/SFCS.1977.32}}.

\bibitem{ravi1994}
Ramamoorthi Ravi.
\newblock Rapid rumor ramification: Approximating the minimum broadcast time.
\newblock In {\em Proceedings 35th Annual Symposium on Foundations of Computer
  Science}, pages 202--213, 1994.

\bibitem{S84}
Bruno Scarpellini.
\newblock Complexity of subcases of presburger arithmetic.
\newblock {\em Transactions of the American Mathematical Society},
  284(1):203--218, 1984.
\newblock \href {https://doi.org/10.2307/1999283} {\path{doi:10.2307/1999283}}.

\bibitem{WANG2022311}
Yuejiao Wang, Dajun~Daniel Zeng, Qingpeng Zhang, Pengfei Zhao, Xiaoli Wang,
  Quanyi Wang, Yin Luo, and Zhidong Cao.
\newblock Adaptively temporal graph convolution model for epidemic prediction
  of multiple age groups.
\newblock {\em Fundamental Research}, 2(2):311--320, 2022.
\newblock \href {https://doi.org/https://doi.org/10.1016/j.fmre.2021.07.007}
  {\path{doi:https://doi.org/10.1016/j.fmre.2021.07.007}}.

\end{thebibliography}

\appendix
\end{document}